\newcommand{\E}{\mathbb E}
\newcommand{\A}{\mathcal{A}}
\newcommand{\B}{\mathcal{B}}
\newcommand{\M}{\mathcal{M}}
\newcommand{\N}{\mathcal{N}}
\newcommand{\tb}[1]{\mathbf{#1}}
\def\x{\mathbf{x}}
\def\bx{\mathbf{x}}
\def\n{\mathbf{n}}
\def\s{\mathbf{s}}
\def\y{\mathbf{y}}
\def\0{\mathbf{0}}
\def\C{\mathbf{C}}
\def\I{\mathbf{I}}
\def\R{\mathbf{R}}
\def\1{\mathbf{1}}
\newtheorem{lem}{Lemma}
\newtheorem{thm}{Theorem}
\def\x{{\mathbf x}}
\newcommand{\lk}[1]{{\color{black}{#1}}}
\begin{document}

\begin{frontmatter}

\title{Measurement Design for Detecting Sparse Signals\tnoteref{supp}}
\tnotetext[supp]{A preliminary version of parts of this work was presented at the 2010 American Control Conference, Baltimore, MD. This work is supported in part by ONR Grant N00014-08-1-110 and NSF Grants
ECCS-0700559, CCF-0916314, and CCF-1018472.}

\author{Ramin Zahedi}
\ead{Ramin.Zahedi@colostate.edu}
\author{Ali Pezeshki}
\ead{Ali.Pezeshki@colostate.edu}
\author{Edwin K. P. Chong}
\ead{Edwin.Chong@colostate.edu}

\address{Department of Electrical and Computer Engineering, Colorado State University\\ Fort Collins, CO 80525-1373, USA}

\begin{abstract}

We consider the problem of testing for the presence (or detection) of an unknown sparse signal in additive white noise. Given a fixed measurement budget, much smaller than the dimension of the signal, we consider the general problem of designing compressive measurements to maximize the measurement signal-to-noise ratio (SNR), as increasing SNR improves the detection performance in a large class of detectors. We use a {\em lexicographic optimization} approach, where the optimal measurement design for sparsity level $k$ is sought only among the set of measurement matrices that satisfy the optimality conditions for sparsity level $k-1$. We consider optimizing two different SNR criteria, namely a worst-case SNR measure, over all possible realizations of a $k$-sparse signal, and an average SNR measure with respect to a uniform distribution on the locations of the up to $k$ nonzero entries in the signal. We establish connections between these two criteria and certain classes of tight frames. We constrain our measurement matrices to the class of tight frames to avoid coloring the noise covariance matrix. For the worst-case problem, we show that the optimal measurement matrix is a Grassmannian line packing for most---and a uniform tight frame for all---sparse signals. For the average SNR problem, we prove that the optimal measurement matrix is a uniform tight frame with minimum \emph{sum-coherence} for most---and a tight frame for all---sparse signals.

\end{abstract}

\begin{keyword}
Compressive measurement design, Grassmannian line packing, Hypothesis testing, Lexicographic optimization, Sparse signal detection, Sum-coherence, Uniform tight frame, Worst-case coherence.

\end{keyword}

\end{frontmatter}


\section{Introduction}\label{sec0}

Over the past few years, considerable progress has been made towards developing a mathematical framework for \emph{reconstructing} sparse or compressible signals.\footnote{A vector $\bx=[x_1,x_2,\ldots,x_N]^T$ is sparse when the cardinality of its support $S=\{k: x_k\neq0\}$ is much smaller than its dimension $N$. A vector $\bx$ is compressible, if its entries obey a power law, i.e., the $k$th largest entry in absolute value, denoted by $|x|_{(k)}$, satisfies $|x|_{(k)} \leq C_r \cdot k^{-r}$, $r>1$ and $C_r$ is a constant depending only on $r$ (see, e.g., \cite{CT05}). Then $\|\bx-\bx_k\|_1 \leq \sqrt{k}C_r'\cdot k^{-r+1}$, where $\bx_k$ is the best $k$-term approximation of $\bx$.} The most notable result is the development of the compressed sensing theory (see, e.g., \cite{CT05}--\nocite{Donoho2006,Baraniuk-SPM07,candes2006stable,Romberg-SPM08}\cite{C06}),  which shows that an unknown signal can be recovered from a small (relative to its dimension) number of linear measurements provided that the signal is sparse. Thus, compressed sensing and related sparse recovery methods have become topics of great interest, leading to many exciting developments in sparse representation theory, measurement design, and sparse recovery algorithms (see, e.g, \cite{elad2010sparse}--\nocite{Rauhut-IT08,Tropp-CoSaMP,howard2008fast,applebaum2009chirp,devore2007deterministic,tropp2007signal,daubechies2010iteratively}\cite{sarvotham2006sudocodes}).

The major part of the effort, however, has been focused on {\em estimating} sparse signals. Hypothesis testing (detection and classification) involving sparse signal models, on the other hand, has been scarcely addressed, notable exceptions being \cite{daven}--\nocite{daven2010,hau,wang}\cite{paredes}. Detecting a sparse signal in noise is fundamentally different from reconstructing a sparse signal, as the objective in detection often is to maximize the probability of detection or to minimize a Bayes risk, rather than to find the sparsest signal that satisfies a linear observation equation. We note that in the compressed sensing literature the term ``sparse signal detection'' often means identifying the support of a sparse signal. In this paper, however, we use this term to refer to a binary hypothesis test for the presence or absence of a sparse signal in noise. The problem is to decide whether a measurement vector is a realization from a hypothesized noise only model or from a hypothesized signal-plus-noise model, where in the latter model the signal is sparse in a known basis but the indices and values of its nonzero coordinates are unknown.

Existing work (e.g., see \cite{daven}--\nocite{daven2010}\cite{hau}) is mainly focused on understanding how the performance of well-known detectors (e.g., the Neyman-Pearson detector) are affected by measurement matrices that have the so-called restricted isometry property (RIP). The RIP condition for the measurement matrix is sufficient for the minimum $\ell_1$-norm solution to be exact (or near-exact when the measurements are noisy) (e.g., see \cite{candes2006stable}). A fundamental result of compressed sensing has been to establish that random matrices, with independently and identically distributed (i.i.d.)\ Gaussian or i.i.d.\ Bernoulli entries, satisfy the RIP condition with high probability. The analysis presented in \cite{daven} and \cite{daven2010} provides theoretical bounds on the performance of a Neyman-Pearson detector---quantified by the maximum probability of detection achieved at a pre-specified false alarm rate---when matrices with i.i.d.\ Gaussian entries are used for collecting measurements. In~\cite{hau}, the authors derive bounds on the total error probability for detection, involving both false alarm and miss detection probabilities, but again for measurement matrices with i.i.d.\ entries. Finally, in~\cite{wang} and~\cite{paredes}, the authors develop compressive matched subspace detectors that also use random matrices for collecting measurements for detecting sparse signals in known subspaces.

The body of work reported in \cite{daven}--\nocite{daven2010,hau,wang}\cite{paredes} provides a valuable analysis of the performance of different detectors, but leave the question of how to design measurement matrices to optimize a measure of detection performance open. As in the case of reconstruction, random matrices have been studied in these papers in the context of signal detection primarily because of the tractability of the associated performance analysis. But what are the necessary and sufficient conditions a compressive measurement matrix must have to optimize a desired measure of detection performance? How can matrices that satisfy such conditions be constructed? Our aim in this paper is to take initial but significant steps towards answering these questions. We further clarify our goals and contributions in the next section.


\section{Problem Statement and Main Contributions}

We consider the design of low dimensional (compressive) measurement matrices, with a pre-specified number of measurements, for detecting sparse signals in additive white Gaussian noise. More specifically, we consider the following binary hypothesis test:
\begin{equation}\label{eq16}
\left\{\begin{array}{l}
\mathcal{H}_{0}: \tb{x}=\tb{n},\\
\mathcal{H}_{1}: \tb{x}=\tb{s}+\tb{n},
\end{array}\right.
\end{equation}
where $\x$ is an $(N\times 1)$ vector that describes the state of a physical phenomenon. Under the null hypothesis $\mathcal{H}_{0}$, $\x$ is a white Gaussian noise vector with covariance matrix $E[\n\n^H]=(\sigma_n^2/N)\I$. Under the alternative hypothesis $\mathcal{H}_{1}$, $\x=\s+\n$ consists of a deterministic signal $\s$ distorted by additive white Gaussian noise $\n$.

We assume $\s$ is $k$-sparse in a known basis ${\bm \Psi}$. That is, to say, $\s$ is composed as
\begin{equation}\label{eq9}
\s={\bm \Psi} {\bm \theta},
\end{equation}
where ${\bm \Psi}\in \mathbb{R}^{N\times N}$ is a known matrix, whose columns form an orthonormal basis for $\mathbb{R}^N$, and ${\bm \theta} \in \mathbb{R}^N$ is a $k$-sparse vector, i.e., it has between 1 to $k\ll N$ nonzero entries. We may refer to $\s$ as simply $k$-sparse for brevity.

We wish to decide between the two hypotheses based on a given number $m\le N$ of linear measurements $\y={\bm \Phi}\x$ from $\x$, where
${\bm \Phi} \in \mathbb{R}^{m\times N}$ is a compressive measurement matrix that we will design. The observation vector $\tb{y}={\bm
\Phi}\x$ belongs to one of the following hypothesized models:
\begin{equation}\label{eq:chk}
\left\{\begin{array}{l}
\mathcal{H}_{0}: \tb{y}={\bm \Phi} \tb{n}\sim \N(\0,(\sigma_n^2/N){\bm \Phi}{\bm \Phi}^{H}),\\
\mathcal{H}_{1}: \tb{y}={\bm \Phi} (\tb{s}+\tb{n})\sim \N({\bm
\Phi}\s,(\sigma_n^2/N){\bm \Phi}{\bm \Phi}^{H}),
\end{array}\right.
\end{equation}
where the superscript $H$ is the Hermitian transpose. To avoid coloring the noise vector $\n$, we constraint the compressive
measurement matrix ${\bm \Phi}$ to be right orthogonal, that is we
force ${\bm \Phi}{\bm \Phi}^H=\I$.

Rather than limiting ourselves to a particular detector, we look at the general problem of designing compressive measurements to maximize the measurement signal-to-noise ratio (SNR), under $\mathcal{H}_1$, which is given by
\begin{equation}\label{eq:SNR}
\textrm{SNR}=(\tb{s}^{H}{\bm \Phi}^{H} {\bm \Phi}
\tb{s})/(\sigma_{n}^{2}/N).
\end{equation}
This is motivated by the fact that for the class of linear log-likelihood ratio detectors, where the log-likelihood ratio is a linear function of the data, the detection performance is improved by increasing SNR. In particular, for a Neyman-Pearson detector, with false alarm rate $\gamma$, the probability of detection $P_d=Q(Q^{-1}(\gamma)-\sqrt{\textrm{SNR}})$ is monotonically increasing in SNR, where $Q(\cdot)$ is the $Q$-function. In addition, maximizing SNR leads to maximum detection probability, at a pre-specified false alarm rate, when an energy detector is used. Without loss of generality, throughout the paper we assume that $\sigma_{n}^{2}=1$ and $\|\tb{s}\|^2=\|{\bm \theta}\|^2=1$, and so we design ${\bm \Phi}$ to maximize the measured signal energy $\|{\bm \Phi}\tb{s}\|^2$.

In solving the problem, one approach is to assume a value for the sparsity level $k$ and design the measurement matrix ${\bm \Phi}$ based on this assumption. This approach, however, runs the risk that the true sparsity level might be different. An alternative approach is not to assume any specific sparsity level. Instead, when designing the measurement matrix ${\bm \Phi}$, we prioritize the level of importance of different values of sparsity $k$. In other words, we first find a set of solutions that are optimal for a $k_{1}$-sparse signal. Then, within this set, we find a subset of solutions that are also optimal for $k_{2}$-sparse signals. We follow this procedure until we find a subset that contains a family of optimal solutions for sparsity levels $k_{1}$, $k_{2}$, $k_{3}$, $\cdots$. This approach is known as a \emph{lexicographic optimization} method (see, e.g., \cite{Iser}--\nocite{hajek}\cite{ehr}).

Replacing \eqref{eq9} in \eqref{eq:SNR} yields
\[
\textrm{SNR}=\frac{\|{\bm \Phi}{\bm \Psi} {\bm \theta} \|^{2}}{(\sigma_{n}^2/N)}.
\]
The basis matrix ${\bm\Psi}$ is known, but the $k$-sparse representation vector ${\bm \theta}$ is unknown. That is, the exact number of the nonzero entries in ${\bm \theta}$, their locations, and their values are unknown. The measurement design naturally depends on one's assumptions about the unknown vector ${\bm \theta}$. We consider two different design problems, namely a worst-case SNR design and an average SNR design, as explained below.

\emph{Worst-case SNR design.} In the first case, we assume the vector ${\bm \theta}$ is deterministic but unknown. Then, among all possible deterministic $k$-sparse vectors ${\bm \theta}$, we consider the vector that minimizes the SNR and design the matrix ${\bm \Phi}$ that maximizes this minimum SNR. Of course, when minimizing the SNR with respect to ${\bm \theta}$, we have to find the minimum SNR with respect to locations and values of the nonzero entries in the vector ${\bm \theta}$. To combine this with the lexicographic approach, we design the matrix ${\bm \Phi}$ to maximize the \emph{worst-case} detection SNR, where the worst-case is taken over all subsets of size $k_i$ of elements of ${\bm \theta}$, where $k_i$ is the sparsity level considered at the $i$th level of lexicographic optimization. This is a design for robustness with respect to the worst sparse signal that can be produced in the basis ${\bm \Psi}$. The reader is referred to Section \ref{sec1} for a complete statement of the problem.

We show (see Section \ref{sec2}) that the worst-case detection SNR is maximized when the columns of the product ${\bm \Phi}{\bm \Psi}$ between the compressive measurement matrix ${\bm \Phi}$ and the sparsity basis ${\bm \Psi}$ form a \emph{uniform tight frame}. A uniform tight frame is a frame system in which the frame operator is a scalar multiple of the identity operator and every frame element has the same norm (see, e.g., \cite{casazza2006existence}). We also show that when the signal is $2$-sparse, the optimal frame is a \emph{Grassmannian line packing} (see, e.g., \cite{conway}). For the case where the sparsity level of the signal is greater than two, we provide a lower bound on the worst-case performance. If the number $m$ of measurements allowed is greater than or equal to $\sqrt{N}$, then the Grassmannian line packing frame will be an equiangular uniform tight frame (see, e.g., \cite{stro}--\nocite{pez,SH03,Bodmann,CK03,R07,STDH07}\cite{mal}) and the maximal worst-case SNR can be expressed in terms of the Welch bound. Numerical examples presented in Section \ref{sec5} show that Grassmannian line packing frames provide better worst-case performance than matrices with i.i.d.\ Gaussian entries, which are typically used in sparse signal reconstruction.

\emph{Average SNR design.} In the second case, we assume that the locations of nonzero entries of ${\bm \theta}$ are random but their values are deterministic and unknown. We find the matrix ${\bm \Phi}$ that maximizes the expected value of the minimum SNR. The expectation is taken with respect to a random index set with uniform distribution over the set of all possible subsets of size $k_i$ of the index set $\{1,2,\ldots,N\}$ of elements of ${\bm \theta}$. The minimum SNR, whose expected value we wish to maximize, is calculated with respect to the values of the entries of the vector ${\bm\theta}$ for each realization of the random index set. The reader is referred to Section \ref{sec3} for a complete statement of the problem.

We show (see Section \ref{sec4}) that for 1-sparse signals, any right orthogonal measurement matrix ${\bm \Phi}$, i.e., any tight frame, is optimal for maximizing the average minimum SNR. For signals with sparsity levels higher than one, we constrain ourselves to the class of uniform tight frames and show that optimal measurement matrix is a uniform tight frame that has minimal \emph{sum-coherence}, as described in Section \ref{sec4}. However, to the best of our knowledge constructing such frames remains an open problem in frame theory. Therefore, we limit ourselves to providing performance bounds in the average-case problem.


\section{The Worst-case Problem Statement}\label{sec1}

Since all sparse signals share the fact that they have at least one nonzero entry, it seems natural to first find an optimal measurement matrix for 1-sparse signals. Next, among the set of optimal solutions for this case, we find matrices that are optimal for 2-sparse signals. This procedure is continued for signals with higher sparsity levels. This is a lexicographic optimization approach to maximizing the worst-case SNR.

Consider the $k$th step of the lexicographic approach. In this step, the vector ${\bm \theta}$ has up to $k$ nonzero entries. We do not impose any prior constraints on the locations and the values of the nonzero entries of ${\bm \theta}$. As mentioned earlier, we assume that $\|\tb{s}\|^2=\|{\bm
\theta}\|^2=1$ and $\sigma_n^2=1$. We wish to maximize the minimum (worst-case) SNR, produced by assigning the worst possible locations and values to the nonzero entries of the $k$-sparse vector ${\bm \theta}$. Referring to \eqref{eq:SNR}, this is a worst-case design for maximizing the signal energy $\s^H{\bm \Phi}^H {\bm \Phi}\s$ inside the subspace $\langle {\bm \Phi}^H\rangle$ spanned by the columns of ${\bm \Phi}^H$, since ${\bm \Phi}^H{\bm \Phi}$ is the orthogonal projection operator onto $\langle {\bm \Phi}^H\rangle$.

To define the $k$th step of the optimization procedure more precisely, we need some additional notation. Let $\A_{0}$ be the set containing all $(m \times N)$ right orthogonal matrices ${\bm \Phi}$. Then, we recursively define the set $\A_k$, $k=1,2,\ldots\,$, as the set of solutions to the following optimization problem:

\begin{equation}\label{eq5}
\begin{array}{cl}
\mathop{\max}\limits_{{\bm \Phi}} \mathop{\min}\limits_{\tb{s}} & \|{\bm \Phi}\tb{s}\|^{2},\\
\textrm{s.t.} & {\bm \Phi} \in \A_{k-1},\\
&\|\tb{s}\|=1,\\
& \mbox{$\tb{s}$ is $k$-sparse.}
\end{array}
\end{equation}
In our lexicographic formulation, the optimization problem for the $k$th problem~\eqref{eq5} involves a worst-case objective restricted to the set of solutions $\A_{k-1}$ from the $(k-1)$th problem. So, $\A_{k}\subset \A_{k-1}\subset \cdots \subset \A_{0}$.

Before we present a complete solution to these problems, we first simplify them in three steps. First, since the matrix ${\bm \Psi}$ is known, the matrix ${\bm \Phi}$ can be written as ${\bm \Phi}=\C {\bm \Psi}^H$, where $\C$ is an $(m \times N)$ matrix. Then, ${\bm \Phi} {\bm \Psi}=\C {\bm \Psi}^{H} {\bm \Psi}=\C$, and also ${\bm \Phi}{\bm \Phi}^H=\C{\bm \Psi}^{H}{\bm \Psi} \C^H =\C\C^H=\I$. Using \eqref{eq9}, the max-min problems~(\ref{eq5}) become
\begin{equation}\label{eq6}
\begin{array}{cl}
\mathop{\max}\limits_{\C} \mathop{\min}\limits_{{\bm \theta}} & \|\C {\bm \theta}\|^{2},\\
\textrm{s.t.} &\C\in \B_{k-1},\\
& \|{\bm \theta}\|=1,\\
& \mbox{$\bm\theta$ is $k$-sparse,}
\end{array}
\end{equation}
where $\B_{0}=\A_{0}$, and similar to the sets $\A_{k}$, the sets $\B_{k}$ ($k=1,2,\ldots\,$) are recursively defined to contain all the optimal solutions of~(\ref{eq6}). It is easy to see that $\B_{k}=\{\C: \C{\bm \Psi}^H\in\A_{k}\}$.

Let $\Omega=\{1,2,\ldots, N\}$ and define $\Omega_{k}$ to be $\Omega_{k}=\{E \subset \Omega: |E|=k\}$. For any $T \in \Omega_{k}$, let ${\bm \theta}_{T}$ be the subvector of size $(k \times 1)$ that contains all the components of ${\bm \theta}$ corresponding to indices in $T$. Similarly, given a matrix $\C$, let $\C_{T}$ be the $(m \times k)$ submatrix consisting of all columns of $\C$ whose indices are in $T$. Note that the vector ${\bm \theta}_{T}$ may have zero entries. Indeed, for cases where the $k$-sparse vector ${\bm \theta}$ has fewer than $k$, e.g., $l<k$, nonzero entries, the $(k \times 1)$ vector ${\bm \theta}_{T}$ has $k-l$ zero entries. This is important because our definition for $T$ and ${\bm \theta}_{T}$ is slightly different than the common definitions used in the compressed sensing literature, where $T$ and ${\bm \theta}_{T}$ only contain indices and values related to the nonzero entries of the vector ${\bm \theta}$, often called the support of $T$. We refer to a member $T$ of $\Omega_{k}$ as a ``$k$-platform''. Thus, a $k$-platform $T$ includes, but is not limited to, the support of the sparse vector ${\bm \theta}$.

Given $T\in \Omega_{k}$, the product $\C{\bm \theta}$ can be replaced by $\C_{T} {\bm \theta_{T}}$ instead. Now, to consider the worst-case scenario for the SNR, as well as considering the worst ${\bm \theta}_{T}$ that minimizes $\|\C_{T}{\bm \theta}_{T}\|^{2}$, we also have to consider the worst $T\in \Omega_{k}$. Thus, the max-min problem becomes
\begin{equation}\label{eq:mmm}
\begin{array}{cl}
\mathop{\max}\limits_{\C} \mathop{\min}\limits_{T} \mathop{\min}\limits_{{\bm \theta}_{T}}& \|\C_{T} {\bm \theta}_{T}\|^{2},\\
\textrm{s.t.} & \C\in \B_{k-1},\\
& \|{\bm \theta}_{T}\|=1, T \in \Omega_{k}.
\end{array}
\end{equation}
The solution to \eqref{eq:mmm} is the most robust design with respect to the locations and values of the nonzero entries of the parameter vector ${\bm \theta}$.

The solution to the minimization subproblem
\[
\begin{array}{cl}
\mathop{\min}\limits_{{\bm \theta}_{T}}& \|\C_{T} {\bm \theta}_{T}\|^{2},\\
\textrm{s.t.} & \|{\bm \theta}_{T}\|=1,
\end{array}
\]
is well known; see, e.g., \cite{chong}. The optimal objective function is $\lambda_{\textrm{min}}(\C^H_{T}\C_{T})$, the smallest eigenvalue of the matrix $\C^H_{T}\C_{T}$. Therefore, the max-min-min problem \eqref{eq:mmm} simplifies to
\begin{equation}\label{eq1}
(\textrm{P}_{k}) \quad
\left\{\begin{array}{cl}
\mathop{\max}\limits_{\C} \mathop{\min}\limits_{T} & \lambda_{\textrm{min}}(\C^H_{T}\C_{T}),\\
\textrm{s.t.} & \C \in \B_{k-1},\\
& T \in \Omega_{k}.
\end{array}\right.
\end{equation}
At each step $k$, the optimal compressive measurement matrix, denoted by ${\bm \Phi}^*$, is determined from the optimizer $\C^*$ of \eqref{eq1}
as ${\bm \Phi}^*=\C^*{\bm \Psi^H}$.

Next, we describe how to solve the max-min problem $(\textrm{P}_{k})$ in \eqref{eq1}.


\section{Solution to the Worst-case Problem}\label{sec2}

Let $\tb{c}_{i}$ be the $i$th column of the matrix $\C$. We first find the solution set $\A_{1}$ for problem $(\textrm{P}_{1})$. Then, we find a subset $\A_{2}\subset \A_{1}$ as the solution for $(\textrm{P}_{2})$. We continue this procedure for general sparsity level $k$.

\subsection{Sparsity Level $k=1$}

If $k=1$, then any $T$ such that $|T|=1$ can be written as $T=\{ i \}$ with $i\in \Omega$, and $\C_{T}=\tb{c}_i$ consists of only the $i$th column of $\C$. Therefore, $\C^H_{T}\C_{T}=\tb{c}_{i}^{H}\tb{c}_{i}=\|\tb{c}_{i}\|^{2}$, and the max-min problem becomes
\begin{equation}\label{eq7}
\begin{array}{cl}
\mathop{\max}\limits_{\C} \mathop{\min}\limits_{i} &
\|\tb{c}_{i}\|^{2},\\
\textrm{s.t.} & \C \in \B_{0}, \\
& i \in \Omega.
\end{array}
\end{equation}
\begin{thm}\label{thm1}
The optimal value of the objective function of the max-min problem~(\ref{eq7}) is $m/N$. A necessary and sufficient condition for a matrix $\C^{*}$ to be in the solution set $\B_{1}$ is that the columns $\{\tb{c}^{*}_i\}_{i=1}^N$ of $\C$ form a {\em uniform tight frame} with norm values equal to $\sqrt{m/N}$.
\end{thm}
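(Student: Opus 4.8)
\medskip
\noindent\emph{Proof strategy.}
The plan is to reduce everything to the single scalar identity that the tight-frame constraint $\C\C^H=\I$ imposes on the column norms. If $\C\in\B_0$, then taking traces gives
\[
\sum_{i=1}^N \norm{\tb{c}_i}^2 \;=\; \mathrm{tr}\!\left(\C^H\C\right) \;=\; \mathrm{tr}\!\left(\C\C^H\right) \;=\; \mathrm{tr}(\I) \;=\; m .
\]
Since the smallest of $N$ nonnegative numbers cannot exceed their average,
\[
\min_{i\in\Omega}\norm{\tb{c}_i}^2 \;\le\; \frac1N\sum_{i=1}^N\norm{\tb{c}_i}^2 \;=\; \frac{m}{N},
\]
with equality if and only if $\norm{\tb{c}_1}^2=\cdots=\norm{\tb{c}_N}^2=m/N$. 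This already gives $m/N$ as an upper bound for the value of~\eqref{eq7} and pins down the equality condition; what remains is to exhibit a feasible $\C$ attaining it.

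For attainability I would invoke (or construct) a uniform tight frame of the right aspect ratio: an $(m\times N)$ matrix $\C$ with orthonormal rows (so $\C\C^H=\I$, i.e.\ a tight frame with frame bound $1$) all of whose columns have the same norm. An explicit choice is a harmonic frame---take $\C$ to consist of $m$ rows of the $(N\times N)$ discrete Fourier transform matrix scaled by $1/\sqrt{N}$---which satisfies $\C\C^H=\I$ and $\norm{\tb{c}_i}^2=m/N$ for every $i$; alternatively one cites the existence results for uniform tight frames (e.g.\ \cite{casazza2006existence}). Such a $\C$ is feasible for~\eqref{eq7} and achieves objective value $m/N$, so the optimum equals $m/N$ and $\B_1\neq\emptyset$.

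It then remains to phrase the characterization of $\B_1$ in frame language. A matrix $\C^*$ lies in $\B_1$ exactly when it is feasible and attains the optimum, i.e.\ when $\C^*(\C^*)^H=\I$ and, by the equality analysis above, $\norm{\tb{c}^*_i}=\sqrt{m/N}$ for all $i$. The condition $\C^*(\C^*)^H=\I$ says the frame operator of $\{\tb{c}^*_i\}_{i=1}^N$ is the identity (hence a scalar multiple of it), so the columns form a tight frame; together with the equal-norm property this is by definition a uniform tight frame, and the trace identity forces the common norm to equal $\sqrt{m/N}$. Conversely, any uniform tight frame with $\C^*(\C^*)^H=\I$ automatically has column norms $\sqrt{m/N}$ and is therefore in $\B_1$.

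I do not expect a genuinely hard step: the entire argument rests on the averaging inequality applied to the trace constraint, and the only external input is the (standard) existence of uniform tight frames with $N$ vectors in $\Real^m$, which in this setting can even be written down explicitly via a DFT submatrix. If anything requires care, it is only the bookkeeping that translates ``attains the max--min value and is right orthogonal'' into ``uniform tight frame with column norms $\sqrt{m/N}$'' and back.
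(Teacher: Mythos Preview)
Your proposal is correct and follows essentially the same approach as the paper: both arguments hinge on the trace identity $\sum_i\norm{\tb{c}_i}^2=\mathrm{tr}(\C\C^H)=m$ together with the existence of an equal-norm tight frame attaining $m/N$. The only cosmetic difference is that you invoke the averaging inequality $\min_i\norm{\tb{c}_i}^2\le m/N$ directly, whereas the paper phrases the same step as a pair of contradictions (optimal value cannot be strictly below or strictly above $m/N$).
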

\begin{proof} We first prove the claim about the optimal value. Assume false, i.e., assume there exists an optimal matrix $\C^{*}\in \B_{1}$ for which the value of the cost function is either less than or greater than $m/N$. Suppose the former is true. Let $\C_{1}$ be an $(m \times N)$ matrix,
satisfying $\C_{1}\C_{1}^H=\I$, whose columns have equal norm $\sqrt{m/N}$. Then, the value of the objective function in~\eqref{eq7} for $\C=\C_{1}$ is $m/N$. This means that our proposed matrix $\C_{1}$ achieves a higher SNR than $\C^{*}$ which is a contradiction. Now, assume the latter is
correct, that is the value of the objective function for $\C^{*}$ is greater than $m/N$. This means $\mathop{\min}\limits_{i \in \Omega}
\|\tb{c}_{i}^*\|^{2}=\|\tb{c}_{j}^*\|^{2}>m/N$. Knowing this, we write
\[
\textrm{tr}\left(\C^*\C^{*H}\right)=\textrm{tr}\left(\C^{*H}\C^{*}\right)=\sum\limits_{i=1}^N \|\tb{c}_{i}^*\|^{2}
>\sum\limits_{i=1}^N m/N=m.
\]
However, from the constraint in \eqref{eq7} we know that $\C^{*}\C^{*H}=\I$, and $\textrm{tr}(\C^{*}\C^{*H})=m$. This is also a contradiction. Thus, the assumption is false and the optimal value for the objective function of \eqref{eq7} is $m/N$.

We now prove the claim about the optimizer $\C^{*}$. From the preceding part of the proof, it is easy to see that all columns of $\C^{*}$ must have equal norm $\sqrt{m/N}$. If not, since none of them can be less than $\sqrt{m/N}$, then the sum of all column norms will be greater than $m$, which is a contradiction. Moreover, we write
\begin{equation}\label{eq8}
\C^{*}\C^{*H}=\sum_{i=1}^{N}\tb{c}_{i}^*\tb{c}_{i}^{*H}=\I.
\end{equation}
Multiplying both sides of \eqref{eq8} by an arbitrary $(m\times 1)$ vector $\tb{x}$ from the right side and $\tb{x}^{H}$ from the left side, we get $\sum_{i=1}^{N}\|\tb{c}_{i}^{*H}\tb{x}\|^2=\|\tb{x}\|^2$. This equation represents a tight frame with frame elements $\{\tb{c}_{i}^*\}$ and frame bound $1$. In other words, it represents a Parseval frame. Since the frame elements have equal norms, the frame is also uniform. Therefore, for a matrix $\C^{*}$ to be in $\B_{1}$, the columns of $\C^{*}$ must form a \emph{uniform tight frame}.
\end{proof}

\textit{Remark 1:} The reader is referred to \cite{casazza2006existence}, \cite{Bodmann}, \cite{casazza2010constructing}, \cite{calderbanksparse}, and the references therein, for examples of constructions of uniform tight frames.

\subsection{Sparsity Level $k=2$}

The next step is to solve $(\textrm{P}_{2})$. Since our solution for this case should lie among the family of optimal solutions for $k=1$, results concluded in the previous part should also be taken into account, i.e., the columns of the optimal matrix $\C^{*}$ must form a uniform tight frame, where the frame elements $\tb{c}^{*}_i$ have norm $\sqrt{m/N}$.

Given $T\in \Omega_{2}$, the matrix $\C_{T}$ consists of two columns, e.g., $\tb{c}_{i}$ and $\tb{c}_{j}$. So, the matrix $\C_{T}^H\C_{T}$ in the
max-min problem~(\ref{eq1}) is a $(2 \times 2)$ matrix:
\[
\C_{T}^H\C_{T}=\left[\begin{array}{cc}
\langle\tb{c}_{i},\tb{c}_{i}\rangle & \langle\tb{c}_{i},\tb{c}_{j}\rangle \\
\langle\tb{c}_{i},\tb{c}_{j}\rangle &
\langle\tb{c}_{j},\tb{c}_{j}\rangle
\end{array}\right].
\]
From the $k=1$ case, we have $\|\tb{c}_i\|^2=\|\tb{c}_j\|^2=m/N$. Therefore,
\[
\C_{T}^H\C_{T}=(m/N)\left[\begin{array}{cc}
1 & \cos{\alpha_{ij}} \\
\cos{\alpha_{ij}} & 1
\end{array}\right],
\]
where $\alpha_{ij}$ is the angle between vectors $\tb{c}_{i}$ and $\tb{c}_{j}$. The minimum eigenvalue of this matrix is
\begin{equation}\label{eq19}
\lambda_{\textrm{min}}(\C_{T}^H\C_{T})=(m/N)(1-|\cos{\alpha_{ij}}|).
\end{equation}
Given any matrix $\C \in \B_{1}$, define \emph{coherence} $\mu_{\C}$ as
\begin{equation}\label{eq26}
\mu_{\C}=\mathop{\max}\limits_{\tb{c}_{i}, \tb{c}_{j}: \textrm{ columns of }\C}\frac{|\langle\tb{c}_{i},\tb{c}_{j}\rangle|}{\|\tb{c}_{i}\|\|\tb{c}_{j}\|}.
\end{equation}
Also, let $\mu^{*}$ be
\begin{equation}\label{eq27}
\mu^*=\mathop{\min}\limits_{\C \in \B_{1}}\mu_{\C}.
\end{equation}
The following theorem holds.

\begin{thm}\label{thm2}
The optimal value of the objective function of the max-min problem $(\textrm{P}_{2})$ is $(m/N)(1-\mu^*)$. A matrix $\C^{*}$ is in $\B_{2}$ if and only if the columns of $\C^{*}$ form a \emph{uniform tight frame} with norm values $\sqrt{m/N}$ and $\mu_{\C^*}=\mu^*$.
\end{thm}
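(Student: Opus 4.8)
The plan is to reduce $(\textrm{P}_2)$ to the elementary computation already carried out in \eqref{eq19} and then to read off the answer from the definitions \eqref{eq26} and \eqref{eq27} of $\mu_{\C}$ and $\mu^*$. First I would invoke the outer constraint: since $\C\in\B_1$, Theorem~\ref{thm1} tells us that the columns $\{\tb{c}_i\}_{i=1}^N$ of any feasible $\C$ form a uniform tight frame with common norm $\sqrt{m/N}$, which is exactly what makes \eqref{eq19} applicable. Consequently, for every $2$-platform $T=\{i,j\}$ (where $i\neq j$ automatically, since $|T|=2$) we have $\lambda_{\textrm{min}}(\C_T^H\C_T)=(m/N)(1-|\cos\alpha_{ij}|)$ with $|\cos\alpha_{ij}|=|\langle\tb{c}_i,\tb{c}_j\rangle|/(\|\tb{c}_i\|\,\|\tb{c}_j\|)$.

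Next I would carry out the two remaining optimizations in order. The map $t\mapsto(m/N)(1-t)$ is strictly decreasing, so $\min_{T\in\Omega_2}\lambda_{\textrm{min}}(\C_T^H\C_T)=(m/N)\bigl(1-\max_{i\neq j}|\cos\alpha_{ij}|\bigr)=(m/N)(1-\mu_{\C})$, the last equality being the definition \eqref{eq26}. Maximizing over $\C\in\B_1$ and using \eqref{eq27} then gives the optimal value $\max_{\C\in\B_1}(m/N)(1-\mu_{\C})=(m/N)(1-\mu^*)$, and the maximizers are precisely the matrices $\C\in\B_1$ with $\mu_{\C}=\mu^*$. Combining this with the description of $\B_1$ from Theorem~\ref{thm1} yields the claimed characterization: $\C^*\in\B_2$ if and only if the columns of $\C^*$ form a uniform tight frame of norm $\sqrt{m/N}$ and $\mu_{\C^*}=\mu^*$.

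The one point that needs a separate (short) argument is that the infimum in \eqref{eq27} is attained, so that $\mu^*$ is meaningful and $\B_2\neq\emptyset$. Here I would observe that $\B_1$ is nonempty (existence is supplied by Theorem~\ref{thm1}) and is a closed and bounded subset of $\Real^{m\times N}$ — closed because it is the zero set of the continuous constraints $\C\C^H-\I$ and $\|\tb{c}_i\|^2-m/N$, and bounded because every column has fixed norm — while $\C\mapsto\mu_{\C}$ is continuous on it, being the maximum of finitely many continuous functions of the Gram matrix entries. Hence the minimum is achieved on $\B_1$, which simultaneously shows $\B_2$ is nonempty. I do not expect a genuine obstacle: the substantive step (diagonalizing the $2\times2$ Gram submatrix) is already done, and what remains is bookkeeping with the definitions together with this compactness remark. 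If desired, one may append the observation that $\mu^*=\min_{\C\in\B_1}\mu_{\C}$ is the smallest worst-case coherence achievable by a uniform tight frame, which is what ties the optimal design to Grassmannian line packings.
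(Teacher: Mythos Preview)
Your proposal is correct and follows essentially the same route as the paper: invoke Theorem~\ref{thm1} to reduce to uniform tight frames, apply \eqref{eq19} to get $\lambda_{\min}(\C_T^H\C_T)=(m/N)(1-|\cos\alpha_{ij}|)$, minimize over $T$ using \eqref{eq26}, and then maximize over $\C\in\B_1$ using \eqref{eq27}. Your compactness argument showing that $\mu^*$ is attained (hence $\B_2\neq\emptyset$) is a welcome addition that the paper's terse proof leaves implicit.
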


\begin{proof}Since our solution must be chosen from the family of uniform tight frames with frame elements of equal norm $\sqrt{m/N}$, the objective function of $(\textrm{P}_{2})$ is only a function of the angle $\alpha_{ij}$. Using~(\ref{eq19}) and~\eqref{eq26}, it is easy to see that the minimum $\lambda_{\textrm{min}}(\C_{T}^H\C_{T})$ is $(m/N)(1-\mu_{\C})$. Using~(\ref{eq27}), we conclude that the largest possible value of the objective function of $(\textrm{P}_{2})$ is $(m/N)(1-\mu^*)$.
\end{proof}

\textit{Remark 2:} Methods for constructing uniform tight frames with frame elements that have a coherence $\mu^*$ is equivalent to optimal \emph{Grassmannian packings} of one-dimensional subspaces, or \emph{Grassmannian line packings} (see, e.g., \cite{conway}--\nocite{stro,pez,SH03,Bodmann,CK03,R07,sus}\cite{mal}). We will say more about this point later in the paper.

\textit{Remark 3:} In the case where $k=2$, the matrix $\C_{T}^H\C_{T}$ (where $\C \in \B_{1}$), for any choice of $T\in \Omega_2$, is a $(2 \times 2)$ matrix with minimum and maximum eigenvalues equal to $(m/N)(1\pm |\cos{\alpha_{ij}}|)$. Therefore, the matrix $\C_{T}^H\C_{T}$ with eigenvalues equal to $(m/N)(1\pm \mu_{\C})$ has the smallest minimum eigenvalue and the largest maximum eigenvalue among eigenvalues of all matrices of the form $\C_{T}^H\C_{T}$ (for a fixed $\C$ and a varying $T$). Moreover, among all $\C \in B_{1}$, when comparing the resulting submatrices $\C_{T}^H\C_{T}$ for $T \in \Omega_2$, the matrix $\C^*$ with coherence $\mu^*$ has the largest minimum eigenvalue $(m/N)(1- \mu^*)$ and the smallest maximum eigenvalue $(m/N)(1+ \mu^*)$. This means that given any vector $\s \in \R^2$ and $T \in \Omega_2$, the following inequalities hold:
\begin{equation}\label{wq1}
(1-\mu^*)\|\s\|^2\leq \|\C^*_{T}\s\|^2\leq (1+\mu^*)\|\s\|^2.
\end{equation}

Recall the definition of Restricted Isometry Property (RIP) (see, e.g., \cite{C06}): Let $\mathbf{A}$ be a $(p \times q)$ matrix and let $l\leq q$ be an integer. Suppose $\delta_l\ge 0$ is the smallest constant such that, for every $(p \times l)$ submatrix $\mathbf{A}_{l}$ of $\mathbf{A}$ and every $(l \times 1)$ vector $\s$,
\[
(1-\delta_{l})\|\s\|^2\leq \|\mathbf{A}_{l}\s\|^2\leq (1+\delta_{l})\|\s\|^2.
\]
Then, the matrix $\mathbf{A}$ is said to satisfy the $l$-restricted isometry property ($l$-RIP) with the restricted isometry constant (RIC) $\delta_{l}$.

By comparing the 2-RIP definition with \eqref{wq1}, we can conclude that the optimal matrix $\C^*$ not only satisfies the 2-RIP with RIC $\mu^*$, but also among all matrices that satisfy 2-RIP and have uniform column norms equal to $\sqrt{m/N}$, it provides the best RIC. Thus, our solution for  optimizing the worst-case SNR for 2-sparse signals is also the ideal matrix for recovering 2-sparse signals based on methods that rely on the RIP condition for their performance guarantees.

\subsection{Sparsity Level $k>2$}

We now consider the case where $k>2$. In this case, $T\in \Omega_{k}$ can be written as $T=\{i_{1},i_{2},\cdots,i_{k}\}\subset\Omega$. From the previous results, we know that an optimal matrix $\C^{*}\in \B_{k}$ must already satisfy two properties, in addition to $\C^{*}\C^{*H}=\I$:
\begin{itemize}
\item Columns of $\C^{*}$ must build a uniform tight frame with equal norm $\sqrt{m/N}$ (to be in the set $\B_{1}$),
\item The coherence $\mu_{\C^*}$ should be equal to $\mu^*$ (to be in the set $\B_{2}$).
\end{itemize}

Taking the above properties into account for $\C^{*}$, the matrix $\C_{T}^{*H}\C_{T}^{*}$ will be a $(k \times k)$ symmetric matrix that can be written as $\C_{T}^{*H}\C_{T}^{*}=(m/N)[\I+\mathbf{A}_{T}]$ where $\mathbf{A}_{T}$ is
\begin{equation}\label{eq29}
\mathbf{A}_{T}=\left[\begin{array}{cccc}
0 & \cos{\alpha_{i_{1}i_{2}}^*} & \ldots & \cos{\alpha_{i_{1}i_{k}}^*}\\
\cos{\alpha_{i_{1}i_{2}}^*} & 0 & \ldots & \cos{\alpha_{i_{2}i_{k}}^*}\\
\vdots & \vdots & \ddots& \vdots\\
\cos{\alpha_{i_{1}i_{k}}^*} & \cos{\alpha_{i_{2}i_{k}}^*} &\ldots & 0 \\
\end{array}\right],
\end{equation}
where $i_{h}\neq i_{f} \in T$ for the entry $\cos{\alpha_{i_{h}i_{f}}^*}$ in the $(i_{h},i_{f})$th location. Then,
\begin{equation}\label{eq22}
\lambda_{\textrm{min}}(\C_{T}^{*H}\C_{T}^{*})=(m/N)(1+\lambda_{\textrm{min}}(\mathbf{A}_{T})).
\end{equation}

So, the problem simplifies to
\begin{equation}
(\textrm{P}_{k}) \quad
\left\{\begin{array}{cl}
\mathop{\max}\limits_{\C} \mathop{\min}\limits_{T} & \lambda_{\textrm{min}}(\mathbf{A}_{T}),\\
\textrm{s.t.} & \C \in \B_{k-1},\\
& T \in \Omega_{k}.
\end{array}\right.
\end{equation}
Solving the above problem is not trivial. It is worth mentioning that, as we will discuss later, the family of frames lying in the set $\B_{2}$ are known to be Grassmannian line packings. Building such frames is known to be very hard and in fact, for a lot of values of $m$ and $N$, no solution has been found so far (see, e.g.,~\cite{conway}). This means that building solutions for problems ($\textrm{P}_{k}$) is even a harder task. Nevertheless, we provide bounds on the value of the optimal objective function.

Given $T \in \Omega_{k}$, let $\delta_{i_{h}i_{f}}^*$ be
\begin{equation}\label{eq28}
\delta_{i_{h}i_{f}}^*=\mu^*-|\cos{\alpha_{i_{h}i_{f}}^*}|, \quad i_{h}\neq i_{f} \in T.
\end{equation}
Also, define $\Delta^{*}$ in the following way:
\[
\Delta^{*}=\mathop{\min}\limits_{T \in \Omega_{k}} \sum_{i_{h}\neq i_{f} \in T} \delta_{i_{h}i_{f}}^*.
\]
The following theorem holds.
\begin{thm}\label{thm3}
The optimal value of the objective function of the max-min problem $(\textrm{P}_{k})$ for $k>2$ lies between $(m/N)(1-{k \choose 2}\mu^*+\Delta^{*})$ and $(m/N)(1-\mu^*)$.
\end{thm}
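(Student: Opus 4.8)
The plan is to bound $\min_{T}\lambda_{\min}(\mathbf{A}_T)$ from above and below, then translate via \eqref{eq22}. The upper bound is immediate: for any $T \in \Omega_k$, the matrix $\mathbf{A}_T$ is a symmetric hollow matrix (zero diagonal), so $\mathrm{tr}(\mathbf{A}_T) = 0 = \sum_{j=1}^k \lambda_j(\mathbf{A}_T)$, which forces $\lambda_{\min}(\mathbf{A}_T) \le 0$. Hence $\lambda_{\min}(\C_T^{*H}\C_T^*) \le m/N$ for every admissible $\C^* \in \B_{k-1}$ and every $T$, giving the upper bound $(m/N)(1-\mu^*)$ once we note that among matrices in $\B_2$ this value must actually be attainable --- actually, to be careful, the cleanest route is: pick the 2-platform achieving coherence $\mu^*$ (which exists since $\C^* \in \B_2$ has some pair with $|\cos\alpha_{ij}^*| = \mu^*$), embed it in a $k$-platform $T_0$, and apply eigenvalue interlacing (Cauchy) for the principal $2\times 2$ submatrix of $\mathbf{A}_{T_0}$, whose smallest eigenvalue is $-\mu^*$; interlacing gives $\lambda_{\min}(\mathbf{A}_{T_0}) \le -\mu^*$, so $\min_T \lambda_{\min}(\mathbf{A}_T) \le -\mu^*$, i.e.\ the optimal value of $(\textrm{P}_k)$ is at most $(m/N)(1-\mu^*)$.

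For the lower bound I would use a Gershgorin-type estimate. Any eigenvalue $\lambda$ of $\mathbf{A}_T$ satisfies, by Gershgorin's disc theorem, $|\lambda| \le \max_{i_h \in T}\sum_{i_f \ne i_h}|\cos\alpha_{i_h i_f}^*|$; but a cruder uniform bound suffices here. Write $|\cos\alpha_{i_h i_f}^*| = \mu^* - \delta_{i_h i_f}^*$ using \eqref{eq28}, where $\delta_{i_h i_f}^* \ge 0$ by definition of $\mu^* = \mu_{\C^*}$. Then for any unit vector $\u \in \Real^k$,
\[
|\u^H \mathbf{A}_T \u| \le \sum_{i_h \ne i_f \in T} |\cos\alpha_{i_h i_f}^*|\,|u_h||u_f| \le \sum_{i_h \ne i_f \in T}|\cos\alpha_{i_h i_f}^*| \le \binom{k}{2}\mu^* - \sum_{i_h \ne i_f \in T}\delta_{i_h i_f}^*,
\]
where I used $|u_h||u_f| \le \tfrac12(u_h^2 + u_f^2)$ and $\sum u_h^2 = 1$ to pass to the second inequality (the factor works out because each $u_h^2$ appears $k-1$ times, but the $\binom k2$ count comes from the blunt bound $|u_h||u_f|\le 1$; I will use whichever gives the stated constant $\binom k2$). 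Taking $\u$ to be the eigenvector for $\lambda_{\min}(\mathbf{A}_T)$ yields $\lambda_{\min}(\mathbf{A}_T) \ge -\binom{k}{2}\mu^* + \sum_{i_h \ne i_f \in T}\delta_{i_h i_f}^*$, and minimizing the right side over $T$ replaces the last sum by $\Delta^*$. Therefore $\min_T \lambda_{\min}(\mathbf{A}_T) \ge -\binom{k}{2}\mu^* + \Delta^*$, and \eqref{eq22} gives the lower bound $(m/N)(1 - \binom{k}{2}\mu^* + \Delta^*)$.

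The main subtlety --- not a deep obstacle but the place requiring care --- is reconciling the quadratic-form bound with the exact constant $\binom{k}{2}$ in the statement: the inequality $|u_h||u_f| \le 1$ is what produces $\binom k2$ terms each bounded by $\mu^*$, whereas the sharper $\tfrac12(u_h^2+u_f^2)$ weighting would give a smaller constant, so one must check that the paper indeed intends the blunt bound (it does, since $\Delta^*$ is defined as a full sum over all $\binom k2$ pairs). A secondary point is justifying that the upper bound $(m/N)(1-\mu^*)$ is the correct ceiling rather than merely an upper bound on each $\lambda_{\min}(\C_T^{*H}\C_T^*)$: the interlacing argument above handles this, but one should be explicit that $\C^* \in \B_{k-1} \subseteq \B_2$ guarantees the existence of a pair of columns realizing coherence exactly $\mu^*$, so the $2\times 2$ principal submatrix has eigenvalue $-\mu^*$ and Cauchy interlacing applies. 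Everything else is routine linear algebra.
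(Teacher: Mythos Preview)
Your upper-bound argument is fine once you pass to the interlacing step; the paper does the same thing via Rayleigh's inequality with test vectors $\x_{ij},\y_{ij}$ supported on two coordinates with entries $\pm 1/\sqrt{2}$, which are precisely the eigenvectors of the $2\times2$ principal block extended by zeros. So that half matches.

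The lower bound has a genuine gap. You offer two candidate inequalities, $|u_h||u_f|\le 1$ and $|u_h||u_f|\le\tfrac12(u_h^2+u_f^2)$, and defer to ``whichever gives the stated constant $\binom{k}{2}$''. Neither does, as you have written them. Since $\u^H\mathbf{A}_T\u=2\sum_{h<f}u_hu_f\cos\alpha^*_{i_hi_f}$, the blunt bound $|u_h||u_f|\le1$ yields $|\u^H\mathbf{A}_T\u|\le 2\sum_{h<f}|\cos\alpha^*_{i_hi_f}|$, a factor of $2$ too large; the AM--GM bound yields the weighted row sum $\sum_h u_h^2\sum_{f\ne h}|\cos\alpha^*_{i_hi_f}|$, which is at most $(k-1)\mu^*$ but does not separate into $\binom{k}{2}\mu^*-\Delta^*$. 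The step you actually need is that, for unit $\u$, $2|u_h||u_f|\le u_h^2+u_f^2\le\|\u\|^2=1$ holds \emph{for each pair} $\{h,f\}$, so each summand obeys $2u_hu_f\cos\alpha^*_{i_hi_f}\ge-|\cos\alpha^*_{i_hi_f}|$ individually; summing over the $\binom{k}{2}$ unordered pairs then gives exactly $\lambda_{\min}(\mathbf{A}_T)\ge -\binom{k}{2}\mu^*+\sum_{h<f}\delta^*_{i_hi_f}$. The paper packages this as Weyl's inequality $\lambda_{\min}\bigl(\sum\mathbf{F}_{i_hi_f}\bigr)\ge\sum\lambda_{\min}(\mathbf{F}_{i_hi_f})$ applied to the decomposition $\mathbf{A}_T=\sum_{h<f}\mathbf{F}_{i_hi_f}$, where $\mathbf{F}_{i_hi_f}$ carries $\cos\alpha^*_{i_hi_f}$ in positions $(h,f),(f,h)$ and zeros elsewhere, so $\lambda_{\min}(\mathbf{F}_{i_hi_f})=-|\cos\alpha^*_{i_hi_f}|$; at the level of quadratic forms this is the same termwise inequality.
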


\begin{proof}Let $\tb{x}_{ij}$ and $\tb{y}_{ij}$ be two $(k \times 1)$ vectors such that $\tb{x}_{ij}$ contains values $(1/\sqrt{2})$ and $(-1/\sqrt{2})$ and $\tb{y}_{ij}$ contains values $(1/\sqrt{2})$ and $(1/\sqrt{2})$ in the $i$th and $j$th locations $(i\neq j)$ and zeros elsewhere. Then, by using Rayleigh's inequality, i.e.,
\[
\lambda_{\textrm{min}}(\mathbf{A}_{T})\leq \frac{\tb{x}^H \mathbf{A}_{T} \tb{x}}{\tb{x}^H\tb{x}},
\]
for the matrix $\mathbf{A}_{T}$ defined above and the family of vectors $\{\tb{x}_{ij}\}$ and $\{\tb{y}_{ij}\}$ defined by $i$ and $j$ (chosen from the set $\{1,2,\ldots, k\}$), we conclude that $\lambda_{\textrm{min}}(\mathbf{A}_{T})\leq -|\cos{\alpha_{i_{h}i_{f}}^*}|,\  i_{h}\neq i_{f}\in T$.
Thus,
\begin{equation}\label{eq20}
\mathop{\min}\limits_{T\in \Omega_{k}}\lambda_{\textrm{min}}(\mathbf{A}_{T})\leq \mathop{\min}\limits_{i_{h}\neq i_{f} \in T \atop T\in \Omega_{k}}(-|\cos{\alpha_{i_{h}i_{f}}^*}|)=-\mu^{*}.
\end{equation}

Given $T\in \Omega_{k}$, the matrix $\mathbf{A}_{T}$ can be written as summation of ${k \choose 2}$ matrices $\mathbf{F}_{i_{h}i_{f}}$ ($i_{h}\neq i_{f}\in T$) where each matrix $\mathbf{F}_{i_{h}i_{f}}$ has the entry $\cos{\alpha_{i_{h}i_{f}}^*}$ in the $(i_{h},i_{f})$th and $(i_{f},i_{h})$th locations and zeros elsewhere. Using matrix properties (see, e.g., \cite{kopol}), we can write
\begin{align*}
\lambda_{\textrm{min}}(\mathbf{A}_{T})&\geq \sum_{i_{h}\neq i_{f} \in T \atop T\in \Omega_{k}} \lambda_{\textrm{min}}(\mathbf{F}_{i_{h}i_{f}})=\sum_{i_{h}\neq i_{f} \in T \atop T\in \Omega_{k}} -|\cos{\alpha_{i_{h}i_{f}}^*}|\\ & =\sum_{i_{h}\neq i_{f} \in T \atop T\in \Omega_{k}}-\mu^*+\delta_{i_{h}i_{f}}^*=-{k \choose 2}\mu^*+\sum_{i_{h}\neq i_{f} \in T \atop T\in \Omega_{k}}\delta_{i_{h}i_{f}}^*.
\end{align*}
Therefore,
\begin{equation}\label{eq21}
\mathop{\min}\limits_{T\in \Omega_{k}}\lambda_{\textrm{min}}(\mathbf{A}_{T})\geq -{k \choose 2}\mu^*+\Delta^{*}.
\end{equation}
Using~(\ref{eq22}), (\ref{eq20}), and~(\ref{eq21}) we get
\setlength\arraycolsep{0.1em}
\begin{align}\label{eq23}
(m/N)(1-\mu^*) \geq \mathop{\min}\limits_{T\in \Omega_{k}}\lambda_{\textrm{min}}(\C_{T}^{*H}\C_{T}^{*})\qquad & \nonumber \\
 \geq (m/N)(1-{k \choose 2} \mu^*+\Delta^{*}).&
\end{align}
This completes the proof.
\end{proof}

\subsection{Equiangular Uniform Tight Frames and Grassmannian Packings}

The inequality~(\ref{eq23}) in Theorem~(\ref{thm3}) suggests that if all angles between column pairs are equal, then the optimal value of the objective function of $(\textrm{P}_{k})$ for $k>2$ will reach its upper bound. In this case, the columns of $\C^{*}\in \B_{k}$ in fact form an \emph{equiangular uniform tight frame}.

Equiangular uniform tight frames are Grassmannian packings, where a collection of $N$ one-dimensional subspaces are packed in $\mathbb{R}^m$ such that the chordal distance between each pair of subspaces is the same (see, e.g., \cite{conway}, \cite{pez}, and \cite{SH03}). Each one-dimensional subspace is the span of one of the frame element vectors $\tb{c}_i$. The chordal distance between the $i$th subspace $\langle \tb{c}_i \rangle$ and the $j$th subspace $\langle \tb{c}_j \rangle$ is given by
\begin{equation}\label{eq11}
d_{c}(i,j)=\sqrt{\sin^{2}{\alpha_{ij}}},
\end{equation}
where $\alpha_{ij}$ is the angle between $\tb{c}_i$ and $\tb{c}_j$. When all the $\alpha_{ij}$, $i\neq j$, are equal and the frame is tight, the chordal distances between all pairs of subspaces become equal, i.e., $d_c(i,j)=d_c$ for all $i\neq j$, and they take their maximum value. This maximum value is the simplex bound given by
\begin{equation}\label{eq10}
d_{c}=\sqrt{(N(m-1))/(m(N-1))}.
\end{equation}
Alternatively, the largest absolute value of the cosine of the angle between any two frame elements is bounded as
\[
\max\limits_{i\neq j}|\cos \alpha_{ij}|\ge \sqrt{(N-m)/(m(N-1))}.
\]
The derivation of this lower bound is originally due to Welch \cite{Welch}. The Welch bound, or alternatively the simplex bound, are reached if and only if the vectors $\{\tb{c}_i\}_{i=1}^N$ form an equiangular uniform tight frame. This is possible only for some values of $m$ and $N$. It is shown in~\cite{STDH07} that this is possible only when $1<m<N-1$ and
\begin{equation}\label{eq24}
N\leq \min\{m(m+1)/2,(N-m)(N-m+1)/2\}
\end{equation}
for frames with real elements, and
\begin{equation}\label{eq25}
N\leq \min\{m^{2},(N-m)^{2}\}
\end{equation}
for frames with complex elements. If the above conditions hold, then the optimal solution for $(\textrm{P}_{k})$ for $k>2$ is a matrix $\C^{*}$ such that its columns form an equiangular uniform tight frame with frame elements of equal norm $\sqrt{m/N}$ and angle $\alpha$ defined as
\begin{equation}\label{eq13}
\alpha=\arcsin{\left(\sqrt{\left(\frac{m-1}{m}\right)\left(\frac{N}{N-1}\right)}\right)}.
\end{equation}
The optimal value of the objective function of $(\textrm{P}_{k})$ in this case is $(m/N)(1-\mu^*)$, where $\mu^*=|\cos{\alpha}|=\sqrt{(N-m)/(m(N-1))}$.

In other cases where $N$ and $m$ do not satisfy the condition~(\ref{eq24}) or~(\ref{eq25}), the following inequality provides a tighter bound than the simplex bound for $\mu^*$ for some values of $N$ and $m$ (see~\cite{mixon}):
\[
\mu^{*}\geq \cos{\left[\pi \left(\frac{(m-1)}{N\sqrt{\pi}}\frac{\Gamma(\frac{m+1}{2})}{\Gamma(\frac{m}{2})}\right)^{1/(m-1)}\right]}.
\]
Applying the above inequalities to~(\ref{eq23}), we conclude that by using a Grassmannian \lk{line} packing where the $k$ largest angles among angles between column pairs of the matrix $\C^{*}$ are as close as possible to the angle $\alpha$ related to $\mu^*$, the value of the SNR is guaranteed to be higher than the computed lower bound. This is, however, a very difficult problem since even finding Grassmannian line packings for different values of $N$ and $m$ is still an open problem. The reader is referred to \cite{conway} and \cite{SH03} for more details.

We have thus considered a worst-case design criterion in which we assume nothing about the vector ${\bm \theta}$, and our design is robust against arbitrary possibilities of this unknown.


\section{The Average-case Problem Statement}\label{sec3}

In the worst-case problem, an optimal $k$-platform $T$ for problem $(\textrm{P}_{k})$ is a member of $\Omega_{k}$ that minimizes $\|\C_{T}{\bm \theta_{T}}\|^{2}$. In this section, instead of finding the worst-case $T$, we consider an average-case problem with a random $T$. Let $T_{k}$ to be a random variable taking values in $\Omega_{k}$, uniformly distributed over $\Omega_{k}$. In other words, if we let $p_{k}(t)$ be the probability that $T_{k}=t$ where $t \in \Omega_{k}$, then
\[
p_{k}(t)={N \choose k}^{-1},\qquad \forall t\in \Omega_{k}.
\]
Our goal is to find a measurement matrix ${\bm \Phi}$ that maximizes the expected value of the minimum SNR, where the expectation is with respect to the random $k$-platform $T_k$, and the minimum is with respect to the entries of the vector ${\bm\theta}$ on $T_k$. Taking into account the simplifying steps used earlier for the worst-case problem in Section~\ref{sec1} and also adopting the lexicographic approach, the problem of maximizing the average SNR can then be formulated in the following way: Let $\N_{0}$ be the set containing all $(m \times N)$ right orthogonal matrices. Then for $k=1,2,\ldots\,$, recursively define the set $\N_{k}$ as the solution set to the following optimization problem:
\begin{equation}
\left\{\begin{array}{cl}
\mathop{\max}\limits_{\C} \E_{T_{k}} \mathop{\min}\limits_{{\bm \theta}_{k}} & \|\C_{T_{k}}{\bm \theta}_{k}\|^{2},\\
\textrm{s.t.} & \C \in \N_{k-1},\\
&\|{\bm \theta}_{k}\|=1,
\end{array}\right.
\end{equation}
where $\E_{T_{k}}$ is the expectation with respect to $T_{k}$. As before, the $(m \times k)$ matrix $\C_{T_{k}}$ are the columns of $\C$ whose indices are in $T_{k}$. The above can be simplified to the following:
\begin{equation}\label{eq12}
(\textrm{F}_{k}) \quad
\left\{\begin{array}{cl}
\mathop{\max}\limits_{\C} & \E_{T_{k}}  \lambda_{\textrm{min}}(\C_{T_{k}}^{H}\C_{T_{k}}),\\
\textrm{s.t.} & \C \in \N_{k-1}.
\end{array}\right.
\end{equation}


\section{Solution to the Average-case Problem}\label{sec4}

To solve the lexicographic problems $(\textrm{F}_{k})$, we follow the same method we used earlier for the worst-case problem, i.e., we begin by solving problem $(\textrm{F}_{1})$. Then, from the solution set $\N_{1}$, we find optimal solutions for the problem $(\textrm{F}_{2})$, and so on.

\subsection{Sparsity Level $k=1$}

Assume that the signal $\tb{s}$ is 1-sparse. So, there are ${N \choose 1}=N$ different possibilities to build the matrix $\C_{T_{1}}$ from the matrix $\C$. The expectation in problem $(\textrm{F}_{1})$ can be written as:
\begin{equation}\label{eqa5}
\E_{T_{1}} \lambda_{\textrm{min}}(\C_{T_{1}}^H\C_{T_{1}})=
\sum_{t\in \Omega_{1}}p_{1}(t)\lambda_{\textrm{min}}(\C_{t}^H\C_{t})=\sum_{i=1}^{N}p_{1}({\{i\}})\|\tb{c}_{i}\|^{2}=\frac{m}{N}.
\end{equation}
The following result holds.
\begin{thm}\label{thma1}
The optimal value of the objective function of problem $(\textrm{F}_{1})$ is $m/N$. This value is obtained by using any right orthogonal matrix $\C \in \N_{0}$, i.e., any \emph{tight frame}.
\end{thm}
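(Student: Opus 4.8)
The plan is to establish the theorem in two halves: (i) that the objective value in $(\textrm{F}_1)$ is identically $m/N$ for \emph{every} $\C\in\N_0$, and (ii) that consequently the solution set $\N_1$ equals $\N_0$, i.e., every right orthogonal matrix (equivalently, every tight frame) is optimal. The computation in \eqref{eqa5} already does most of the work for half (i): for $k=1$ each platform $t=\{i\}$ gives $\C_t=\tb{c}_i$, so $\lambda_{\textrm{min}}(\C_t^H\C_t)=\|\tb{c}_i\|^2$, and taking the expectation with uniform weights $p_1(\{i\})={N\choose 1}^{-1}=1/N$ yields $\frac1N\sum_{i=1}^N\|\tb{c}_i\|^2=\frac1N\textrm{tr}(\C^H\C)=\frac1N\textrm{tr}(\C\C^H)=\frac1N\textrm{tr}(\I)=m/N$, using the right-orthogonality constraint $\C\C^H=\I$ that defines $\N_0$.

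The key observation I would emphasize is that this value does \emph{not} depend on the individual column norms $\|\tb{c}_i\|^2$ at all --- only on their sum, which is pinned to $m$ by the trace constraint. This is the structural reason the average-case $k=1$ problem behaves so differently from the worst-case $k=1$ problem of Theorem~\ref{thm1}: the min over platforms in the worst case forces equal column norms, but the expectation in the average case is insensitive to the distribution of norm among columns. So I would state and prove: for all $\C\in\N_0$, $\E_{T_1}\lambda_{\textrm{min}}(\C_{T_1}^H\C_{T_1})=m/N$.

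Given this, half (ii) is immediate: since the objective is constant on the feasible set $\N_0$, the maximum is $m/N$ and the solution set $\N_1$ --- defined as the set of optimizers of $(\textrm{F}_1)$ --- is all of $\N_0$. I would then just recall that $\N_0$ consists of all $(m\times N)$ right orthogonal matrices, and that the rows (or the columns, after normalization) of such a matrix form a tight frame; indeed $\C\C^H=\I$ is precisely the Parseval/tight-frame condition, exactly as argued in the proof of Theorem~\ref{thm1} via $\sum_i\|\tb{c}_i^H\x\|^2=\|\x\|^2$. Hence $\N_1=\N_0$ and every tight frame is optimal.

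I do not anticipate a genuine obstacle here --- the theorem is essentially a one-line trace computation followed by the observation that a constant objective makes the whole feasible set optimal. If anything, the only point requiring a word of care is the direction ``every tight frame is a right orthogonal matrix'': one should note that ``tight frame'' here is meant in the normalized (Parseval) sense so that the frame operator is exactly $\I$ rather than a general positive multiple $\lambda\I$, which is consistent with the normalization $\C\C^H=\I$ imposed throughout the paper. With that convention fixed, the equivalence $\C\in\N_0\iff$ columns of $\C$ form a (uniform-weight) tight frame is definitional, and the proof is complete.
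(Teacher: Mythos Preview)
Your proposal is correct and follows essentially the same approach as the paper: the paper's proof simply points back to the computation in \eqref{eqa5}, which already shows that the objective equals $m/N$ for \emph{every} $\C\in\N_0$ (via $\sum_i\|\tb{c}_i\|^2=\textrm{tr}(\C\C^H)=m$), and then concludes $\N_1=\N_0$. Your write-up is in fact clearer than the paper's terse ``very similar to the proof given in Theorem~\ref{thm1}'', since you make explicit that a constant objective over the feasible set renders the optimality claim immediate rather than requiring any contradiction argument.
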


\begin{proof}The first part is already proved. The proof for optimality is very similar to the proof given in Theorem~\ref{thm1}. Thus, $\N_{1}=\N_{0}$.
\end{proof}

Theorem~\ref{thma1} shows that unlike the worst-case problem, any tight frame is an optimal solution for the problem $(\textrm{F}_{1})$.

Next, we study the case where the signal $\tb{s}$ is 2-sparse.

\subsection{Sparsity Level $k=2$}

For problem $(\textrm{F}_{2})$, the expected value term $\E_{T_{2}} \lambda_{\textrm{min}}(\C_{T_{2}}^H\C_{T_{2}})$ is equal to
\begin{equation}
\sum_{t\in \Omega_{2}}p_{2}(t)\lambda_{\textrm{min}}(\C_{t}^H\C_{t})=
\quad \sum_{j=2}^{N}\sum_{i=1}^{j-1}p_{2}(\{i,j\})\lambda_{\textrm{min}}(\C_{\{i,j\}}^H\C_{\{i,j\}}).\nonumber
\end{equation}
Now, since $p_{2}(t)=1/{N \choose 2}=2/(N(N-1)), \forall t\in \Omega_{2}$, we can go further and write $\E_{T_{2}} \lambda_{\textrm{min}}(\C_{T_{2}}^H\C_{T_{2}})$ as
\begin{equation}\label{neq1}
\frac{2}{N(N-1)}\sum_{j=2}^{N}\sum_{i=1}^{j-1}\lambda_{\textrm{min}}(\C_{\{i,j\}}^H\C_{\{i,j\}}).
\end{equation}
Solving problem $(\textrm{F}_{2})$ with this objective function is not trivial in general. In fact, claiming anything about solutions of the family of problems $(\textrm{F}_{k})$, $k=2,3,\ldots,$ is hard. However, if we constrain ourselves to the class of \emph{uniform} tight frames, which also arise in solving the worst-case problem, we can establish necessary and sufficient conditions for optimality. Nonetheless, these conditions are different from those for the worst-case problem and as we will show next the optimal solution here is a uniform tight frame for which a cumulative measure of coherence is minimal.

Let $\M_{1}$ be defined as $\M_{1}=\{\C:\C \in \N_{1}, \| \tb{c}_{i} \|=\sqrt{m/N},\forall i \in \Omega \}$. Also, for $k=2,3,\ldots,$ recursively define the set $\M_{k}$ as the solution set to the following optimization problem:
\begin{equation}\label{neq2}
(\textrm{F}^{'}_{k}) \quad
\left\{\begin{array}{cl}
\mathop{\max}\limits_{\C} & \E_{T_{k}}  \lambda_{\textrm{min}}(\C_{T_{k}}^{H}\C_{T_{k}}),\\
\textrm{s.t.} & \C \in \M_{k-1}.
\end{array}\right.
\end{equation}
We will concentrate on solving the above problems instead of the family of problems $(\textrm{F}_{k})$, $k=2,3,\ldots$. For $k=2$, we have the following result.
\begin{thm}\label{thma2}
The matrix $\C$ is in $\M_{2}$ if and only if the frame sum-coherence $\sum_{j=2}^{N}\sum_{i=1}^{j-1}|\langle\tb{c}_{i},\tb{c}_{j}\rangle|$ is minimized.
\end{thm}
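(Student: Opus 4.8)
The plan is to reduce the average-case objective for $k=2$, restricted to the class of uniform tight frames $\M_1$, to an explicit affine functional of the frame sum-coherence, and then read off the optimality condition. The starting point is the expression already derived in~\eqref{neq1} together with the eigenvalue formula~\eqref{eq19}: for any $\C\in\M_1$, every column has norm $\sqrt{m/N}$, so $\lambda_{\textrm{min}}(\C_{\{i,j\}}^H\C_{\{i,j\}})=(m/N)(1-|\cos\alpha_{ij}|)=(m/N)(1-|\langle\tb{c}_i,\tb{c}_j\rangle|/(m/N))=(m/N)-|\langle\tb{c}_i,\tb{c}_j\rangle|$. Substituting this into~\eqref{neq1} gives
\[
\E_{T_{2}}\lambda_{\textrm{min}}(\C_{T_{2}}^{H}\C_{T_{2}})
=\frac{2}{N(N-1)}\sum_{j=2}^{N}\sum_{i=1}^{j-1}\Bigl(\frac{m}{N}-|\langle\tb{c}_i,\tb{c}_j\rangle|\Bigr)
=\frac{m}{N}-\frac{2}{N(N-1)}\sum_{j=2}^{N}\sum_{i=1}^{j-1}|\langle\tb{c}_i,\tb{c}_j\rangle|.
\]
The first term is a constant over $\M_1$, and the coefficient of the sum is a fixed negative constant, so maximizing the left-hand side over $\C\in\M_1$ is equivalent to minimizing $\sum_{j=2}^{N}\sum_{i=1}^{j-1}|\langle\tb{c}_i,\tb{c}_j\rangle|$, which is exactly the frame sum-coherence. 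This establishes both directions of the ``if and only if'': a matrix $\C\in\M_1$ lies in $\M_2$ precisely when it attains the minimum of the sum-coherence over $\M_1$.

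The one point that deserves care is to confirm that the constraint set $\M_1$ is nonempty and that the minimum of the sum-coherence over it is actually attained, so that $\M_2$ is well-defined and the ``only if'' direction is not vacuous. Nonemptiness follows from Theorem~\ref{thm1}, which exhibits uniform tight frames with column norms $\sqrt{m/N}$ (equivalently, from $\B_1\neq\emptyset$, e.g. via the harmonic-frame constructions cited in Remark~1); existence of a minimizer follows because the sum-coherence is a continuous function on the compact set of $(m\times N)$ matrices with $\C\C^H=\I$ and prescribed column norms. I would state these two observations briefly and then invoke the displayed identity above.

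The main obstacle, such as it is, is purely bookkeeping rather than conceptual: one must be careful that the eigenvalue identity $\lambda_{\textrm{min}}=(m/N)-|\langle\tb{c}_i,\tb{c}_j\rangle|$ uses the column-norm normalization of $\M_1$ in an essential way (this is where restricting from $\N_1$ to $\M_1$ pays off), and that the double sum in~\eqref{neq1} ranges over each unordered pair exactly once, matching the definition of sum-coherence in the theorem statement. Once the normalization is in place, the argument is a one-line affine reduction with no inequalities to chase, in contrast to the worst-case analysis where a $\min$ over platforms had to be controlled. I expect the proof to be short.
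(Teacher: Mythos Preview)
Your proof is correct and follows essentially the same route as the paper: compute $\lambda_{\min}(\C_{\{i,j\}}^H\C_{\{i,j\}})$ for equal-norm columns, substitute into \eqref{neq1}, and observe that the objective is an affine (decreasing) function of the sum-coherence. The only cosmetic difference is that the paper first writes the general $2\times 2$ eigenvalue formula $\tfrac12(\|\tb{c}_i\|^2+\|\tb{c}_j\|^2-f(i,j))$ and then specializes to equal norms, whereas you invoke \eqref{eq19} directly; your added remark on compactness/attainment is a nice touch that the paper omits.
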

\begin{proof}
For $k=2$, the value of $\lambda_{\textrm{min}}(\C_{t}^H\C_{t})$ for $t=\{i,j\}\in \Omega_{2}$ is equal to
\[
\lambda_{\textrm{min}}(\C_{\{i,j\}}^H\C_{\{i,j\}})=(1/2)(\|\tb{c}_{i}\|^{2}+\|\tb{c}_{j}\|^{2}-f(i,j)),
\]
where $f(i,j)$ is defined as $f(i,j)=\sqrt{(\|\tb{c}_{i}\|^{2}-\|\tb{c}_{j}\|^{2})^{2}+4\langle\tb{c}_{i},\tb{c}_{j}\rangle^2}$. Now, if we replace this in~\eqref{neq1}, we get
\begin{align*}
&\frac{1}{N(N-1)}\left(\sum_{j=2}^{N}\sum_{i=1}^{j-1}\|\tb{c}_{i}\|^{2}+\|\tb{c}_{j}\|^{2}-f(i,j)\right)\\
&=\frac{1}{N(N-1)}\left((N-1)\sum_{i=1}^{N}\|\tb{c}_{i}\|^{2}-\sum_{j=2}^{N}\sum_{i=1}^{j-1}f(i,j)\right)\\
&=\frac{(N-1)m}{N(N-1)}-\frac{1}{N(N-1)}\sum_{j=2}^{N}\sum_{i=1}^{j-1}f(i,j)=\frac{m}{N}-\frac{1}{N(N-1)}\sum_{j=2}^{N}\sum_{i=1}^{j-1}f(i,j).
\end{align*}
Since $\C\in \M_{1}$, then using the fact that $\|\tb{c}_{i}\|=\sqrt{m/N}$, $\forall i \in \Omega$, we can go one step further and write the above objective function as
\[
\frac{m}{N}-\frac{2}{N(N-1)}\sum_{j=2}^{N}\sum_{i=1}^{j-1}|\langle\tb{c}_{i},\tb{c}_{j}\rangle|.
\]
Therefore, solving problem $(\textrm{F}^{'}_{2})$ becomes equivalent to solving the following optimization problem:
\begin{equation}\label{eqa6}
\begin{array}{cl}
\mathop{\min}\limits_{\C} & \sum_{j=2}^{N}\sum_{i=1}^{j-1}|\langle \tb{c}_{i},\tb{c}_{j} \rangle|,\\
\textrm{s.t.} & \C \in \M_{1}.
\end{array}
\end{equation}
\end{proof}

Theorem~\ref{thma2} shows that for problem $(\textrm{F}^{'}_{2})$, angles between column pairs of the uniform tight frame $\C$ should be designed in a different way than for the worst-case problem. Several articles (though not many) discuss such frames. In~\cite{Donoho2}, the authors introduce a similar concept where instead of finding the minimum of the above summation, they are looking for the maximum, and call it the ``cluster coherence'' of the frame. In~\cite{Bodmann}, where the authors use frames in coding theory applications, it is proved that the solution to one of the problems discussed in the paper is found by solving~\eqref{eqa6}. However, to the best of our knowledge, finding such a frame system is still an open problem---there is no known general solution for problem~\eqref{eqa6}. We call the value of the optimal objective function of~\eqref{eqa6} the \emph{minimum sum-coherence}. The following lemma provides bounds for the objective function of this optimization problem.
\begin{lem}
For a uniform tight frame $\C$ with column norms equal to $\sqrt{m/N}$, the following inequalities hold:
\[
ab|(N/m-1)-2(N-1)\mu_{\C}^{2}|\leq \sum_{j=2}^{N}\sum_{i=1}^{j-1}|\langle\tb{c}_{i},\tb{c}_{j}\rangle|\leq ab(N-1)\mu_{\C}^{2},
\]
where
\[
a=\left(\frac{(m/N)^2}{1-2(m/N)}\right),\quad b=\left(\frac{N(N-2)}{2}\right).
\]
\end{lem}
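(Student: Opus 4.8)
The plan is to use the one structural fact about $\C$ beyond the uniform column norm: since $\C\C^{H}=\I$, the Gram matrix $P:=\C^{H}\C$ is an orthogonal projection of rank $m$, so $P^{2}=P$ and $\textrm{tr}(P^{2})=\textrm{tr}(P)=m$. Two consequences drive the two bounds. Comparing the $(i,j)$-entry of $P^{2}$ with that of $P$ for $i\neq j$, and isolating the terms $l=i$ and $l=j$ (each contributes $(m/N)\langle\tb{c}_{i},\tb{c}_{j}\rangle$ because every column has squared norm $m/N$), gives
\[
\Bigl(1-\tfrac{2m}{N}\Bigr)\langle\tb{c}_{i},\tb{c}_{j}\rangle=\sum_{l\neq i,j}\langle\tb{c}_{i},\tb{c}_{l}\rangle\langle\tb{c}_{l},\tb{c}_{j}\rangle .
\]
Comparing the $(i,i)$-entries (equivalently, using $\textrm{tr}(P^{2})=m$) gives the energy identities $\sum_{l\neq i}|\langle\tb{c}_{i},\tb{c}_{l}\rangle|^{2}=(m/N)(1-m/N)$ for each $i$, hence $\sum_{j=2}^{N}\sum_{i=1}^{j-1}|\langle\tb{c}_{i},\tb{c}_{j}\rangle|^{2}=\tfrac12 m(1-m/N)$.

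For the upper bound I would take absolute values in the first identity, apply the triangle inequality, and bound each of the $N-2$ factors by $|\langle\tb{c}_{i},\tb{c}_{l}\rangle|\le\|\tb{c}_{i}\|\,\|\tb{c}_{l}\|\,\mu_{\C}=(m/N)\mu_{\C}$. Since $m\ll N$ we have $1-2m/N>0$ (precisely the positivity that makes $a$ well defined and positive), so this yields the pointwise estimate $|\langle\tb{c}_{i},\tb{c}_{j}\rangle|\le a(N-2)\mu_{\C}^{2}$. Summing over all $\binom{N}{2}$ column pairs and using $\binom{N}{2}(N-2)=b(N-1)$ gives $\sum_{i<j}|\langle\tb{c}_{i},\tb{c}_{j}\rangle|\le ab(N-1)\mu_{\C}^{2}$.

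For the lower bound I would play the fixed $\ell_{2}$ energy against the coherence ceiling: from $0\le|\langle\tb{c}_{i},\tb{c}_{j}\rangle|\le(m/N)\mu_{\C}$ we get $|\langle\tb{c}_{i},\tb{c}_{j}\rangle|^{2}\le(m/N)\mu_{\C}\,|\langle\tb{c}_{i},\tb{c}_{j}\rangle|$; summing over pairs and dividing by $(m/N)\mu_{\C}$, together with the energy identity above, produces a lower bound on $\sum_{i<j}|\langle\tb{c}_{i},\tb{c}_{j}\rangle|$. One then rewrites this, using $\binom{N}{2}$, the energy value $\tfrac12 m(1-m/N)$, and the definitions of $a$ and $b$ (and, for the $2(N-1)\mu_{\C}^{2}$ term, a second pass through the projection identity summed over all pairs), into the stated form $ab\,|(N/m-1)-2(N-1)\mu_{\C}^{2}|$; note that by the Welch bound $(N-1)\mu_{\C}^{2}\ge N/m-1$, so the quantity inside the absolute value is negative and this equals $ab\bigl(2(N-1)\mu_{\C}^{2}-(N/m-1)\bigr)$.

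I expect the lower bound to be the main obstacle. The upper bound is essentially forced, since the projection identity hands over a pointwise inequality that one merely sums. A lower bound on the $\ell_{1}$ mass of the off-diagonal correlations has no pointwise analogue — individual correlations can vanish — so it has to be assembled from the two global quantities available, the fixed $\ell_{2}$ energy $\tfrac12 m(1-m/N)$ and the coherence ceiling $(m/N)\mu_{\C}$; steering the resulting estimate exactly onto the stated combination of $a$, $b$, $N/m-1$ and $(N-1)\mu_{\C}^{2}$, rather than onto the more transparent quantity $(N-m)/(2\mu_{\C})$ that the $\ell_{2}/\ell_{\infty}$ trade-off yields directly, is where the bookkeeping is least routine.
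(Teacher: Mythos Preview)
Your upper-bound argument matches the paper's: both extract the off-diagonal projection identity $(1-2m/N)\langle\tb{c}_{i},\tb{c}_{j}\rangle=\sum_{l\neq i,j}\langle\tb{c}_{i},\tb{c}_{l}\rangle\langle\tb{c}_{l},\tb{c}_{j}\rangle$ from $P^{2}=P$, bound each factor by $(m/N)\mu_{\C}$, and sum over the $\binom{N}{2}$ pairs.

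The lower bound has a genuine gap. Your $\ell_{2}/\ell_{\infty}$ trade-off gives $\sum_{i<j}|\langle\tb{c}_{i},\tb{c}_{j}\rangle|\ge (N-m)/(2\mu_{\C})$, and this is \emph{not} the stated bound $ab\,\bigl|(N/m-1)-2(N-1)\mu_{\C}^{2}\bigr|$: the former scales like $\mu_{\C}^{-1}$, the latter like $\mu_{\C}^{2}$, so no amount of bookkeeping or ``second pass'' can rewrite one as the other --- they are different functions of $\mu_{\C}$. (Your bound is in fact tight at an equiangular tight frame, whereas the stated one is not.) The paper reaches the stated form by a different mechanism. It applies the triangle inequality in the direction $\sum_{i<j}|\cos\alpha_{ij}|\ge\bigl|\sum_{i<j}\cos\alpha_{ij}\bigr|$, substitutes the projection identity for each $\cos\alpha_{ij}$ to get $a\bigl|\sum_{i<j}\sum_{l\neq i,j}\cos\alpha_{il}\cos\alpha_{lj}\bigr|$, and then rewrites each product via $\cos\alpha_{il}\cos\alpha_{lj}=\tfrac12\bigl(\cos^{2}\alpha_{il}+\cos^{2}\alpha_{lj}-(\cos\alpha_{il}-\cos\alpha_{lj})^{2}\bigr)$. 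The squared-cosine sum is evaluated exactly from the diagonal of $P^{2}=P$ (your energy identity), producing the $(N/m-1)$ piece after the combinatorial count; the difference terms are bounded crudely by $|\cos\alpha_{il}-\cos\alpha_{lj}|\le 2\mu_{\C}$, producing the $2(N-1)\mu_{\C}^{2}$ piece. That polarization-and-difference step --- not a further $\ell_{2}/\ell_{\infty}$ comparison --- is the idea missing from your plan.
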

\vspace{.2cm}

{\em Proof.} See Appendix A.

\subsection{Sparsity Level $k>2$}

Similar to the worst-case problem, solving problems $(\textrm{F}^{'}_{k})$ for $k>2$ is not only a hard task but also it is not known how to construct  frames with the required properties in practice. This is because the solution sets for these problems all lie in $\M_{2}$ and the problem  $(\textrm{F}^{'}_{2})$ is still an open problem. The following lemma provides a lower bound for the optimal objective function of problem  $(\textrm{F}^{'}_{k})$.
\begin{lem}
The optimal value of the objective function for problem $(\textrm{F}^{'}_{k})$ is bounded below by $(m/N)(1-(k(k-1)/2)\mu_{C})$.
\end{lem}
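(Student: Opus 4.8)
The plan is to bound the expected minimum eigenvalue of $\C_{T_k}^H\C_{T_k}$ uniformly over realizations of $T_k$, which immediately gives a bound on its expectation. Fix any $\C\in\M_{k-1}\subset\M_1$, so every column has norm $\sqrt{m/N}$ and, since $\C\in\M_2$ as well (for $k>2$), the coherence $\mu_\C$ is the minimum sum-coherence optimizer's coherence; in any case I only need $|\langle\tb c_i,\tb c_j\rangle|\le\mu_\C (m/N)$ for all $i\neq j$. Write $\C_{T_k}^H\C_{T_k}=(m/N)(\I+\tb A_{T_k})$ exactly as in the $k>2$ worst-case analysis, with $\tb A_{T_k}$ the hollow Gram matrix whose $(h,f)$ entry is $\langle\tb c_{i_h},\tb c_{i_f}\rangle/(m/N)$, so that $\lambda_{\min}(\C_{T_k}^H\C_{T_k})=(m/N)(1+\lambda_{\min}(\tb A_{T_k}))$.

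First I would lower-bound $\lambda_{\min}(\tb A_{T_k})$ for a fixed $k$-platform. Decompose $\tb A_{T_k}$ as the sum of the ${k\choose 2}$ rank-two matrices $\tb F_{i_h i_f}$ (each supported on a single off-diagonal symmetric pair) exactly as in the proof of Theorem~\ref{thm3}. Each $\tb F_{i_h i_f}$ has eigenvalues $\pm|\langle\tb c_{i_h},\tb c_{i_f}\rangle|/(m/N)$, hence $\lambda_{\min}(\tb F_{i_h i_f})=-|\langle\tb c_{i_h},\tb c_{i_f}\rangle|/(m/N)\ge -\mu_\C$. By subadditivity of the minimum eigenvalue (Weyl), $\lambda_{\min}(\tb A_{T_k})\ge\sum_{i_h\neq i_f\in T_k}\lambda_{\min}(\tb F_{i_h i_f})\ge -{k\choose 2}\mu_\C$. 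Therefore $\lambda_{\min}(\C_{T_k}^H\C_{T_k})\ge(m/N)(1-{k\choose 2}\mu_\C)=(m/N)(1-(k(k-1)/2)\mu_\C)$, and this holds for every realization $t\in\Omega_k$.

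Taking expectation over $T_k$ preserves the inequality, so $\E_{T_k}\lambda_{\min}(\C_{T_k}^H\C_{T_k})\ge(m/N)(1-(k(k-1)/2)\mu_\C)$ for every $\C\in\M_{k-1}$; in particular the optimal value of $(\textrm{F}'_k)$ is at least this quantity. Since the optimizer lies in $\M_{k-1}\subset\M_2$, its coherence equals the common coherence value of the $\M_2$ frames, which is what is denoted $\mu_C$ in the statement, so the bound reads $(m/N)(1-(k(k-1)/2)\mu_C)$ as claimed.

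The main obstacle is a matter of bookkeeping rather than depth: one must be careful that every $\C\in\M_{k-1}$ (for $k>2$) really does inherit both the uniform-norm property from $\M_1$ and the fixed coherence value from $\M_2$, so that the same $\mu_C$ applies across all platforms and the bound is uniform; this is exactly the nesting $\M_k\subset\M_{k-1}\subset\cdots\subset\M_1$ built into the recursive definition. A secondary point is to state clearly that $\lambda_{\min}$ is superadditive over a matrix sum (equivalently, apply Weyl's inequality to each partial sum), since this is the only nontrivial matrix fact invoked; the rest is the same rank-two eigenvalue computation already used in Theorem~\ref{thm3}.
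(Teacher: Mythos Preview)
Your proposal is correct and follows essentially the same approach as the paper: decompose $\C_t^H\C_t=(m/N)(\I+\mathbf{A}_t)$, split $\mathbf{A}_t$ into the $\binom{k}{2}$ rank-two pieces $\mathbf{F}_{i_h i_f}$, apply the superadditivity of $\lambda_{\min}$ (Weyl) to get $\lambda_{\min}(\mathbf{A}_t)\ge -\binom{k}{2}\mu_{\C}$ for every $t\in\Omega_k$, and then take the expectation. The only difference is presentational: the paper states the pointwise eigenvalue bound and then averages, without your additional discussion of the $\M_{k-1}\subset\M_2\subset\M_1$ nesting and the meaning of $\mu_C$.
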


{\em Proof.} See Appendix B.


\section{Simulation Results}\label{sec5}

As mentioned earlier, constructing uniform tight frames with coherence ${\mu^*}$ is an open problem for arbitrary $(m,N)$ pairs. However, examples of such frames are available for modest values of $m$ and $N$, mostly for $1\le m\le 16$ and $1\le N \le 50$ (see~\cite{webad}). To be more precise, the examples in \cite{webad} are the best uniform tight frames (in terms of coherence) that the site publisher is aware of. In some cases, these frames in fact have coherence $\mu^*$. In other cases, their coherence is larger than $\mu^\ast$. For the minimum sum-coherence problem, the examples are even more scarce, and in fact we are not aware of any examples for $(m,M)$ dimensions large enough to be of interest to our study. Therefore, we limit our numerical study to the worst-case problem, where we evaluate the performance of several uniform tight frames from \cite{webad}.

In all simulations, we assume $\sigma^{2}_{n}=1$ and $\|{\bm \theta}_{T}\|=1$. We present plots of the worst-case $\textrm{SNR}/N$, where the worst-case SNR is given by
\begin{equation}
 \textrm{SNR}=\mathop{\min}\limits_{T}\mathop{\min}\limits_{\theta_{T}}\|{\bm \Phi}{\bm \Psi}_{T}{\bm \theta}_{T}\|^{2}/(\sigma^{2}_{n}/N)=\mathop{\min}\limits_{T}\lambda_{min}(\C^{*H}_{T}\C^{*}_{T}),\nonumber
\end{equation}
by fixing two of the three variables $m$, $N$, and $k$ and changing the third one.
\begin{figure}[!th]
\begin{center}
\begin{tabular}{cc}
\includegraphics[width=62mm,height=4.6cm]{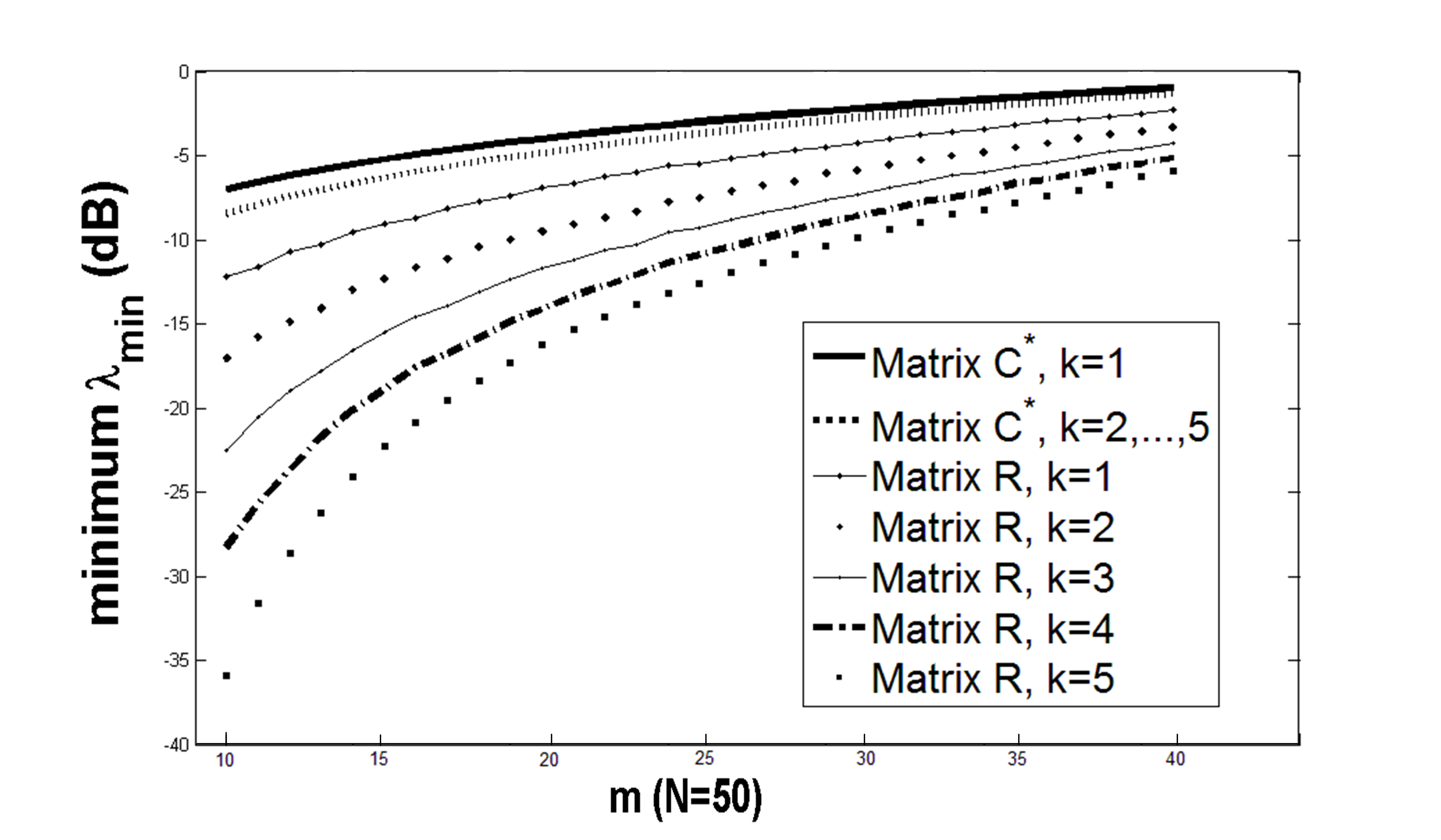} &
\includegraphics[width=62mm,height=4.6cm]{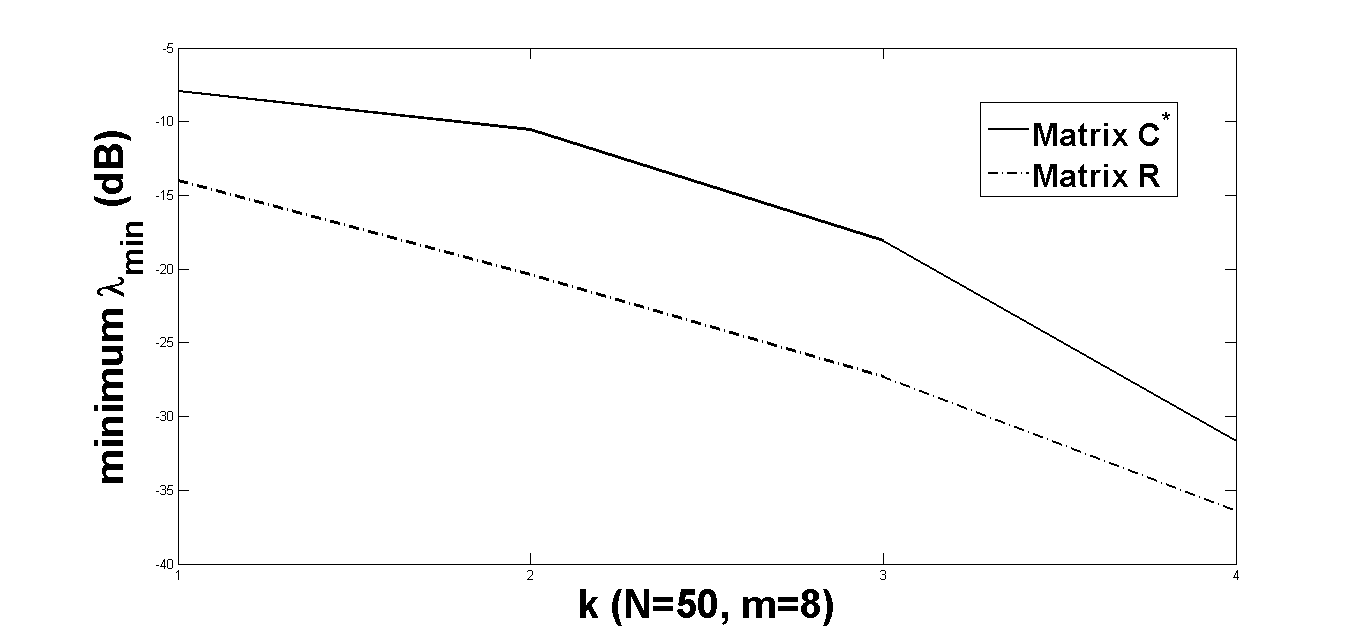}\\
(a) & (b)\\
\includegraphics[width=62mm,height=4.6cm]{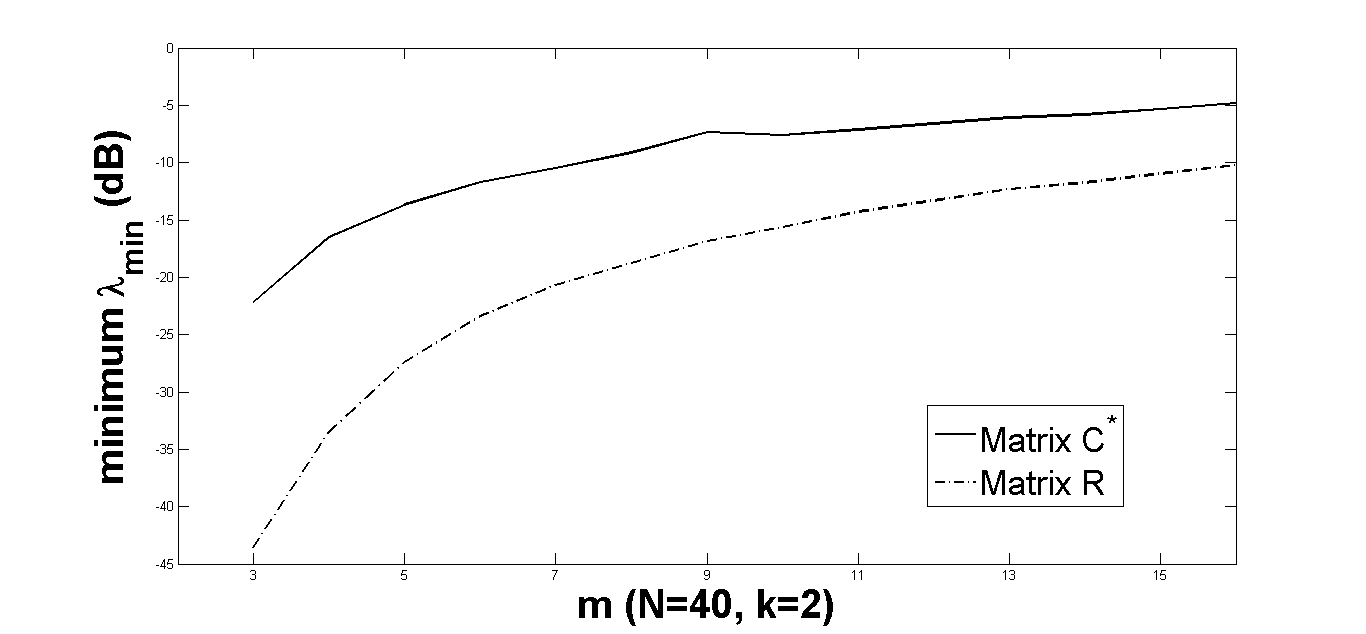}&
\includegraphics[width=62mm,height=4.6cm]{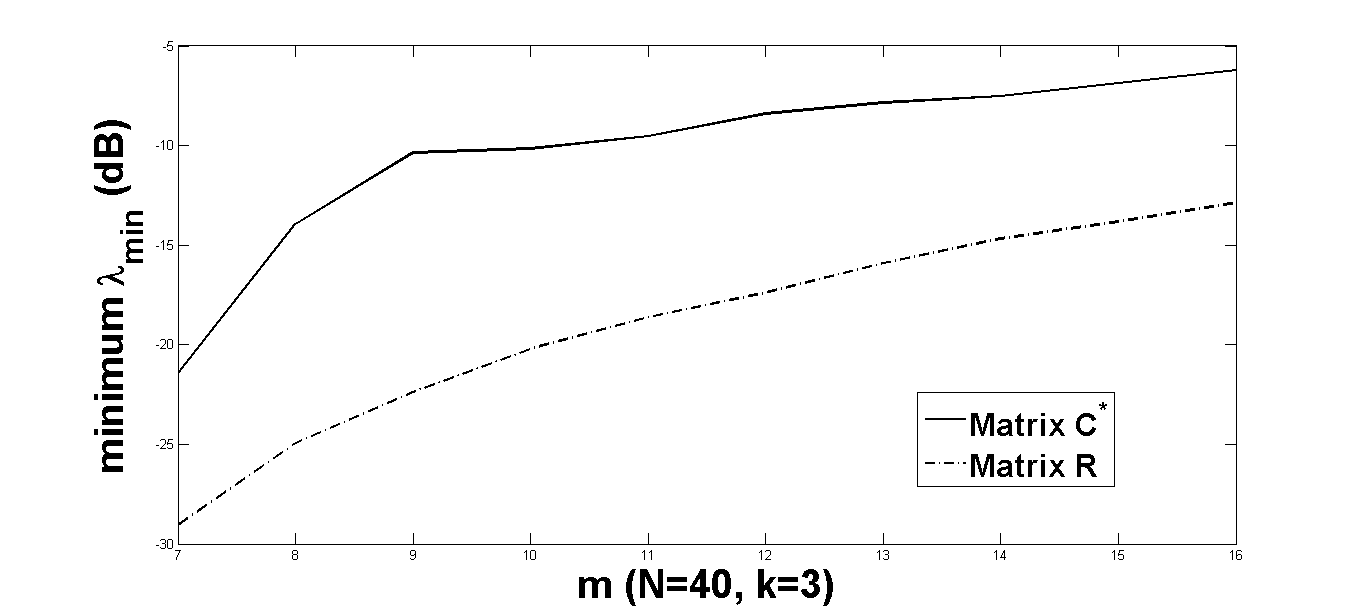}\\
(c) & (d)\\
\includegraphics[width=62mm,height=4.6cm]{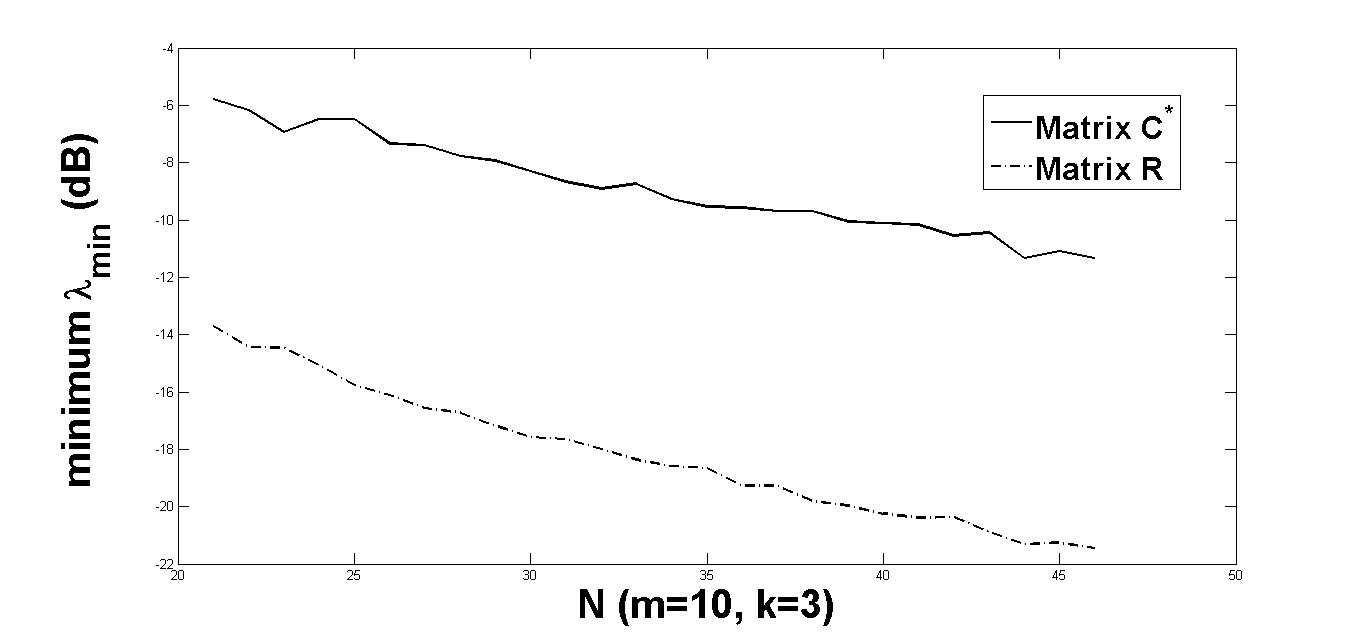}&
\includegraphics[width=62mm,height=4.6cm]{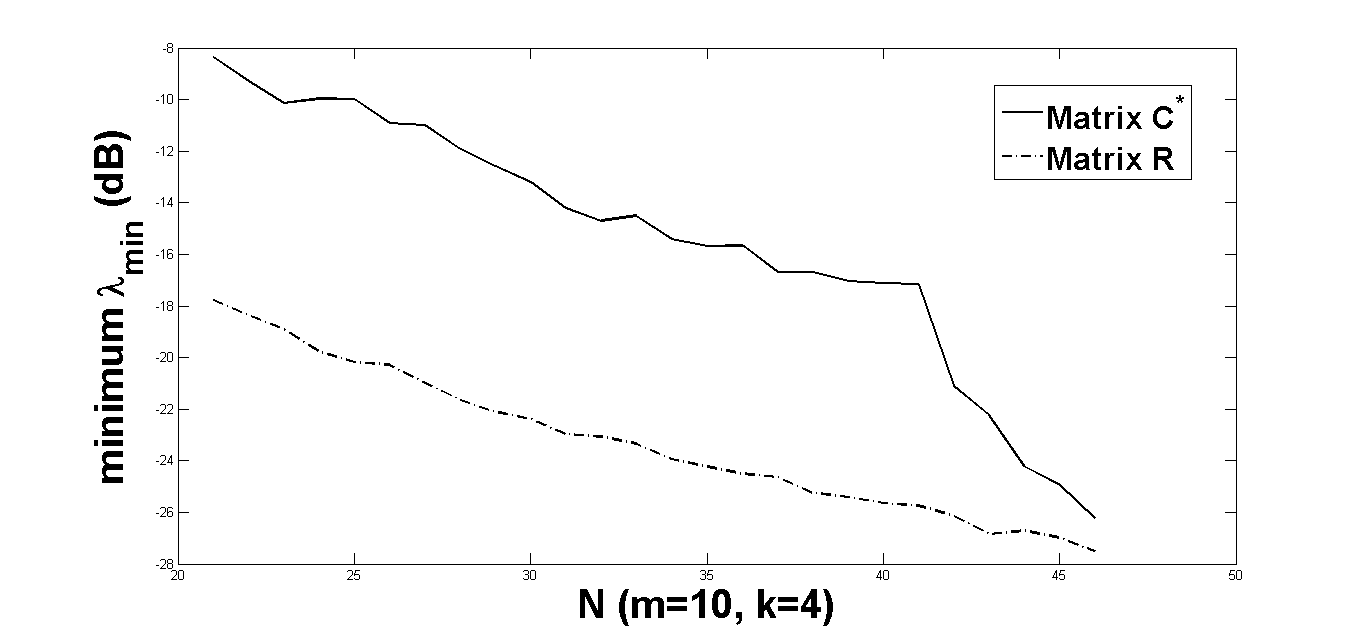}\\
(e) & (f)
\end{tabular}
\end{center}
\caption{Performance comparison between matrices $\C^{*}$ and $\mathbf{R}$: (a) equiangular uniform tight frames. (b) $m=8$, $N=50$, and $k$ is varied. (c) $k=2$, $N=40$, and $m$ is varied. (d) $k=3$, $N=40$, and $m$ is varied (e) $m=10$, $k=3$, and $N$ is varied. (f) $m=10$, $k=4$, and $N$ is varies.}\label{f:sim}
\end{figure}
We compare the performance of our robust (worst-case) design $\C^{*}$ with that of a matrix $\mathbf{R}$ with i.i.d.\ Gaussian
$\mathcal{N}(0,(1/m))$ entries, which is typically used for signal recovery. To satisfy the constraint in problem~\eqref{eq1}, we make $\mathbf{R}$ to be right orthogonal. The value of the objective function in~(\ref{eq1}) is averaged over $100$ realizations of the matrix $\mathbf{R}$.

Figure \ref{f:sim}(a) shows the worst-case SNR performance for a case where the signal dimension is $N=50$ and the measurement budget $m$ is varied from $10$ to $40$. In this case, the condition~(\ref{eq24}) is satisfied and the columns of the optimal matrix $\C^{*}$ form an equiangular uniform tight frame. We can therefore derive an exact expression for the optimal objective function value based on the Welch bound. For $k=1$, this value is equal to $m/N$, and for $k \geq2$, it is equal to $(m/N)(1-\mu^{*})$ where $\mu^{*}=\sqrt{(N-m)/(m(N-1))}$.

We also consider cases where the condition~(\ref{eq24}) is not satisfied, due to a relatively small measurement budget. Here we use Grassmannian line packings to form measurement matrices. For  $(N,m)$ pairs that Grassmannian line packings are not known, we use the best available packings reported in \cite{webad} for those dimensions. Figures \ref{f:sim}(b)-(f) show the performance of such solutions versus the random matrix $\R$ for different case. In each case, we have fixed two of the variables $N$, $m$, and $k$ and have varied the third one. The values of the objective functions in all these plots are in dB. In all scenarios, the wort-case SNR performance corresponding to the the optimal design $\C^{*}$ is better than the average taken over 100 realization the random matrix $\R$.

Note that our simulations are only for cases where $m$, $k$ and $N$ are not very large. As mentioned above, one of the reasons is that the available uniform tight frames in~\cite{webad} are mostly for cases where $1\leq m \leq 16$ and $1\leq N \leq 50$. Also, for values of $N$ bigger than 25 and $k$ bigger than 5, finding the smallest minimum eigenvalue of all ${{\C}_T^*}^H {\C}_T^*$ for different values of $T$ is computationally intractable.

It is important to realize that for most values of $m$ and $N$, the uniform tight frames used in our simulations have a coherence $\mu$ that is bigger than $\mu^*$. In other words, for most values of $m$ and $N$, we are actually comparing the performance of a suboptimal solution matrix instead of the optimal solution with the performance of the random matrix $\R$ and interestingly, the suboptimal solution still has a better performance than the random matrix $\R$ in most, but not all, cases. For example, we notice that in Figure 1(f), the gap between two curves decreases as $N$ increases. This does not contradict with our theoretical results, as the plots in Figure \ref{f:sim} do not show the performance of the optimal solution for most values of $m$ and $N$ after all. Rather they show the performance of the best available uniform tight frame example for the corresponding ($m,N$) values.


\section{Conclusions}\label{sec6}

In this paper, we have considered the design of low-dimensional (compressive) measurement matrices, for a given number of measurements, for maximizing the worst-case SNR and the average minimum SNR. We have shown an interesting connection between maximizing the two SNR criteria for detection and certain classes of frames. In the worst-case SNR problem, we have shown that the optimal measurement matrix is a Grassmannian line packing for most---and a uniform tight frame for all---sparse signals. In the average SNR problem, we have looked for the solution among the class of uniform tight frames and have shown that the optimal measurement matrix is a uniform tight frame that has minimum sum-coherence. Our solutions for both problems provide lower bounds for the performance of the detectors.

\section*{Appendix}
\subsection*{A. Proof of Lemma 1.}
Multiply both sides of $\C\C^H=\I$ from the left by $\C^H$ and from the right side by $\C$ to get
\begin{equation}\label{eqa7}
(\C^H\C)^{2}=\C^H\C.
\end{equation}
The matrix $\C^H\C=\I$ is an ($N \times N$) Hermitian matrix, with $(i,j)$th element $(m/N)\cos\alpha_{ij}$ and diagonal elements $m/N$. Using these values, it is easy to see that the matrix $(\C^H\C)^{2}$ is also a Hermitian matrix with the entry $(m/N)^{2}(\sum_{i=1}^{N}\cos^{2}{\alpha_{ji}})$ on the $j$th diagonal location and the entry
\[
(m/N)^{2}(2\cos{\alpha_{ij}}+\sum_{l=1,\atop l\neq i,j}^{N}\cos{\alpha_{il}}\cos{\alpha_{lj}})
\]
located in the $i$th row and the $j$th column. By comparing the diagonal entries on each side of equation~\eqref{eqa7}, we will get the following family of equations:
\[
\left(\frac{m}{N}\right)^{2}\left(\sum_{i=1}^{N}\cos^{2}{\alpha_{ji}}\right)=\left(\frac{m}{N}\right), \quad j=1,\ldots,N.
\]
If we sum up all the above equations, after simplifying, we get\footnote{The relation \eqref{eqj1} is the well-known frame potential condition (see \cite{benedetto2003finite})
\[
\textrm{FP}=\sum_{i,j=1}^{N}|\langle \tb{c}_{i},\tb{c}_{j} \rangle|^2=m
\]
for tight frames, after it has been simplified by enforcing the equal norm assumption.}
\begin{equation}\label{eqj1}
\sum_{i,j=1}^{N}\cos^{2}{\alpha_{ji}}=\frac{N^{2}}{m}.
\end{equation}
If we compare the off-diagonal entries of matrices on each side of equation~\eqref{eqa7}, then for $i,j=1,\ldots,N$ and $i\neq j$, we get
\[
\left(\frac{m}{N}\right)^{2}\left(2\cos{\alpha_{ij}}+\sum_{l=1,\atop l\neq i,j}^{N}\cos{\alpha_{il}}\cos{\alpha_{lj}}\right)=\left(\frac{m}{N}\right)\cos{\alpha_{ij}},
\]
which simplifies to
\[
\cos{\alpha_{ij}}=\left(\frac{(m/N)}{1-2(m/N)}\right)\left(\sum_{l=1,\atop l\neq i,j}^{N}\cos{\alpha_{il}}\cos{\alpha_{lj}}\right).
\]
Using the triangle inequality, we write
\begin{align*}
\sum_{j=2}^{N}\sum_{i=1}^{j-1}|\langle \tb{c}_{i},\tb{c}_{j} \rangle|&=\frac{m}{N}\sum_{j=2}^{N}\sum_{i=1}^{j-1}|\cos{\alpha_{ij}}|\\
&\geq \frac{m}{N}\left| \sum_{j=2}^{N}\sum_{i=1}^{j-1} \cos{\alpha_{ij}}\right| \\
&=a\left| \sum_{j=2}^{N}\sum_{i=1}^{j-1} \sum_{l=1,\atop l\neq i,j}^{N}\cos{\alpha_{il}}\cos{\alpha_{lj}}\right|.
\end{align*}
We replace $\cos{\alpha_{il}}\cos{\alpha_{lj}}$ with $(1/2)(\cos^{2}{\alpha_{il}}+\cos^{2}{\alpha_{lj}}-(\cos{\alpha_{il}}-\cos{\alpha_{lj}})^{2})$.  The term $\cos^{2}{\alpha_{il}}$ is repeated $2(N-2)$ times in the above summation;
\begin{itemize}
\item Once $i$ and $l$ are fixed, there are $N-2$ choices left for $j$ to choose the angle $\alpha_{lj}$ in the product term $\cos{\alpha_{il}}\cos{\alpha_{lj}}$.
 \item There are also $N-2$ times that the term $\cos{\alpha_{jl}}$ is repeated, which is equal to $\cos{\alpha_{lj}}$.
\end{itemize}
Therefore,
\begin{align*}
\sum_{j=2}^{N}\sum_{i=1}^{j-1} \sum_{l=1,\atop l\neq i,j}^{N}\cos^{2}{\alpha_{il}}+\cos^{2}{\alpha_{lj}}&=2(N-2)\sum_{j=2}^{N}\sum_{i=1}^{j-1}\cos^{2}{\alpha_{ij}}\\
&=2(N-2)\frac{(N^{2}/m)-N}{2}.
\end{align*}
The right hand side of the above inequality simplifies to
\[
(a/2)\left|N(N-2)(N/m-1)-\sum_{j=2}^{N}\sum_{i=1}^{j-1} \sum_{l=1,\atop l\neq i,j}^{N}(\cos{\alpha_{il}}-\cos{\alpha_{lj}})^{2}\right|.
\]
It is easy to show that $|\cos{\alpha_{il}}-\cos{\alpha_{lj}}|\leq 2\mu_{\C}$ for any $i\neq j \neq l=1,\ldots,N$. So,
\[
-\sum_{j=2}^{N}\sum_{i=1}^{j-1} \sum_{l=1,\atop l\neq i,j}^{N}(\cos{\alpha_{il}}-\cos{\alpha_{lj}})^{2}\geq -4\sum_{j=2}^{N}\sum_{i=1}^{j-1} \sum_{l=1,\atop l\neq i,j}^{N}\mu_{\C}^{2}.
\]
Similarly, for a fixed $i$ and $j$, there are $N-2$ possibilities for $l$. Also, there are ${N \choose 2}$ ways to choose $i$ and $j$ from $N$ options. Therefore, the lower bound will be larger than
\begin{align*}
&(a/2)\left|N(N-2)(N/m-1)-2N(N-1)(N-2)\mu_{\C}^{2}\right|\\
&=ab|(N/m-1)-2(N-1)\mu_{\C}^{2}|.
\end{align*}
This is the claimed lower bound.

To find the upper bound, we write
\begin{align*}
\sum_{j=2}^{N}\sum_{i=1}^{j-1}|\langle \tb{c}_{i},\tb{c}_{j} \rangle|&=\frac{m}{N}\sum_{j=2}^{N}\sum_{i=1}^{j-1}|\cos{\alpha_{ij}}|\\
&= a\sum_{j=2}^{N}\sum_{i=1}^{j-1} \left| \sum_{l=1,\atop l\neq i,j}^{N}\cos{\alpha_{il}}\cos{\alpha_{lj}}\right| \\
&\leq a\sum_{j=2}^{N}\sum_{i=1}^{j-1} \sum_{l=1,\atop l\neq i,j}^{N}\left|\cos{\alpha_{il}}\cos{\alpha_{lj}}\right| \\
&\leq a\sum_{j=2}^{N}\sum_{i=1}^{j-1} \sum_{l=1,\atop l\neq i,j}^{N}\mu_{\C}^{2}\\
&=ab(N-1)\mu_{\C}^{2}.
\end{align*}\hfill $\Box$

\subsection*{B. Proof of Lemma 2.}
Similar to the 2-sparse signals case in Section \ref{sec2}, we can write the objective function of problem $(\textrm{F}^{'}_{k})$ in the following way:
\[
\E_{T_{k}} \lambda_{\textrm{min}}(\C_{T_{k}}^H\C_{T_{k}})=\sum_{t\in \Omega_{k}} p_{k}(t)\lambda_{\textrm{min}}(\C_{t}^H\C_{t})
={N \choose k}^{-1}\sum_{t\in \Omega_{k}} \lambda_{\textrm{min}}(\C_{t}^H\C_{t}).
\]
Since the matrix $\C$ is a uniform tight frame, for any $t \in \Omega_{k}$, the matrix $\C_{t}^H\C_{t}$ can be written as $(m/N)[\I+\mathbf{A}_{t}]$, where the $(k \times k)$ matrix $\mathbf{A}_{t}$ is defined in~\eqref{eq29}. Similar to the worst-case design, we can derive the following inequality:
\begin{align*}
\lambda_{\textrm{min}}(\C_{t}^H\C_{t})& \geq (\frac{m}{N})(1-\sum_{i_{l}\neq i_{h} \in t, \atop t \in \Omega_{k}}|\cos{\alpha_{i_{l}i_{h}}}|)\\
&\geq (\frac{m}{N})(1-{k \choose 2}\mu_{\C})\\
&=(\frac{m}{N})(1-(k(k-1)/2)\mu_{\C}).
\end{align*}
Taking the expectation, we get
\begin{align*}
\E_{T_{k}} \lambda_{\textrm{min}}(\C_{T_{k}}^H\C_{T_{k}})&\geq p\sum_{t\in \Omega_{k}}(\frac{m}{N})(1-(k(k-1)/2)\mu_{\C})\\
&=p{N \choose k}(\frac{m}{N})(1-(k(k-1)/2)\mu_{\C})\\
&=(\frac{m}{N})(1-(k(k-1)/2)\mu_{\C}).
\end{align*}
This completes the proof.\hfill $\Box$



\end{document}